\newtheorem{theorem}{Theorem}
\newtheorem{lemma}{Lemma}
\newcommand{\mkq}[1]{{{\color{black}#1}}}
\def\f2{{\mathbb F}_{2}}
\def\fm{{\mathbb F}_{2^m}}
\newcommand{\tr}{\mathrm{tr}}
\newcommand{\cF}{\mathcal{F}}
\title{Two new families of bivariate APN functions}
\author{Marco Calderini\thanks{Dipartimento di Matematica, Universit\`a degli studi di Trento, Italy. marco.calderini@unitn.it} \and Kangquan Li\thanks{College of Liberal Arts and Sciences, National University of Defense Technology, Changsha, 410073, China, likangquan11@nudt.edu.cn.} \and Irene Villa\thanks{Dipartimento di Matematica, Universit\`a degli studi di Trento, Italy. irene.villa@unitn.it} }
\date{}
\begin{document}

\maketitle
\begin{abstract}
    In this work, we present two new families of quadratic APN functions. The first one (F1) is constructed  via biprojective polynomials. This family includes one of the two APN families  introduced by G\"olo\v{g}lu in 2022. Then, following a similar approach as in Li \emph{et al.} (2022), we give another family (F2) obtained by adding  certain terms to F1. As a byproduct, this second family includes one of the two families introduced by  Li \emph{et al.} (2022).    
    Moreover, we show that for $n=12$, from our constructions, we can obtain APN functions  which are CCZ-inequivalent to any other known APN function over $\mathbb{F}_{2^{12}}$.
\end{abstract}

\section{Introduction}

Given two positive integers $n$ and $m$, set $\mathbb F_{2^n}$ and $\mathbb F_{2^m}$ be the finite fields with $2^n$ and $2^m$ elements respectively.
A function from $\mathbb{F}_{2^n}$ to $\mathbb{F}_{2^m}$ is called vectorial Boolean function or $(n,m)$-function. 
Vectorial Boolean functions play an important role in many different areas of mathematics, computer science and engineering. 
In the field of cryptography, and particularly in the design of block ciphers, $(n,m)$-functions are of critical importance, as these are usually the only nonlinear components, and as such, the security of the encryption directly depends on the properties of the $(n,m)$-functions. 

One of the most efficient attacks that can be employed against block ciphers, is the differential cryptanalysis \cite{bihSha1991}. This attack is based on the study of how differences in input can affect the resultant difference at the output. 
The resistance to differential cryptanalysis for a function $F$ from $\mathbb{F}_{2^n}$ to $\mathbb{F}_{2^m}$, used as an S-box in a cipher, is high when the value 
$$
\delta_F = \max_{a\in\mathbb{F}_{2^n}^*,b\in\mathbb{F}_{2^m}} |\{x\in\mathbb{F}_{2^n}\,:\,F(x+a)+F(x)=b\}|.
$$
is small ($\mathbb{F}_{2^n}^*=\mathbb{F}_{2^n}\setminus\{0\}$). When $n = m$, the differential uniformity of any $(n, n)$-function is at least $2$. Functions meeting this bound are called almost perfect nonlinear (APN). 

Discovering new examples and constructions of APN functions is thus a matter of significant practical importance in cryptography. 
Moreover, APN functions are also interesting from a theoretical point of view since there are several known connections between APN functions and other combinatorial and geometrical objects like semi-biplanes, difference sets, distance-regular graphs or dimensional dual hyperovals (see \cite{CGT16,CH99,DO68,DE14}).

The APNness of functions is preserved by some equivalence relations. Among these relations we have the so-called CCZ- and EA-equivalences, and it is important when several functions are considered, to determine whether they correspond to each other by such equivalences. 
CCZ-equivalence is the most general known equivalence relation preserving the APN property \cite{ccz}. 

To date, only six infinite families of APN monomials and more or less 15 (depending on how we count) infinite families of quadratic APN polynomials are known  (for a list of known APN families see \cite{CBC21}).

Recently, G\"olo\v{g}lu introduced the framework of $(q,r)$-biprojective functions \cite{GOLOGLU}. He showed that roughly half of the families of quadratic APN functions defined over an extension of even degree fall into this framework. Moreover, G\"olo\v{g}lu  introduced two new infinite families of APN functions defined over $n=2m$ with $\gcd(3,m)=1$.
Another family, coming from the framework of biprojective functions, has been determined also in \cite{golkol} for the case $m\equiv 2\mod 4$.

One of the families given in \cite{GOLOGLU}, which is reported here in Theorem \ref{th:gol} and corresponds to the construction $\cF_1$, includes the $\mathcal{BHK}$ family of quadrinomial APN functions introduced in \cite{BHK}. More precisely, $\cF_1$ includes the cases for which $\mathcal{BHK}$ family can produce functions inequivalent to APN mappings from other known families.
Moreover, G\"olo\v{g}lu showed that from family $\cF_1$ we can obtain more functions than those coming from the quadrinomials given in \cite{BHK}. Both the constructions require that the functions are defined over $\mathbb{F}_{2^n}$ with $n=2m$, and $3\nmid m$.

Later, Li \emph{et al.} \cite{LZLQ}, inspired by Dillon's method \cite{dillon}, obtained a new family of APN functions by adding some terms of the form $\sum_i(a_ix^{2^i}y^{2^i},b_ix^{2^i}y^{2^i})$ to family $\mathcal{F}_1$. Also in this case, it is needed the restriction $3\nmid m$.

In this work, we introduce two new families of quadratic APN functions. The first one is based on biprojective polynomials. This family can be defined for any value of $m$ and we have also that G\"olo\v{g}lu's family $\cF_1$ is, actually, contained in our construction. 
The second family is obtained by using a similar approach as in \cite{LZLQ}, that is adding certain terms of type $\sum_i(a_ix^{2^i}y^{2^i},b_ix^{2^i}y^{2^i})$ to our first family. This family includes the one obtained in \cite{LZLQ}.

Moreover, from our constructions we can obtain new APN functions, that is, functions which are CCZ-inequivalent to those belonging to any other known family.


\section{Preliminaries}

Let $n$ be a positive integer. We denote by $\mathbb{F}_{2^n}$ the finite field with $2^n$ elements, and by $\mathbb{F}_{2^n}^*$ the set of its non-zero elements, i.e. its multiplicative group.


Any $(n,n)$-function $F : \mathbb{F}_{2^n} \rightarrow \mathbb{F}_{2^n}$ can be expressed as a polynomial of the form $$F(x) = \sum_{i = 0}^{2^n-1} a_i x^i,$$ for $a_i \in \mathbb{F}_{2^n}$. This is called the \textit{univariate representation} of $F$, and it is unique.

The \textit{algebraic degree} of $F$, denoted by $\deg(F)$, is the largest binary weight of an exponent $i$ with $a_i \ne 0$ in the univariate representation of $F$, where the \textit{binary weight} of an integer is the number of ones in its binary representation. Functions of algebraic degree $1$, resp, $2$ are called \textit{affine}, resp. \textit{quadratic}. An affine function $F$ satisfying $F(0) = 0$ is called \textit{linear}.

For $m \mid n$, we denote by $\mathrm{tr}^n_m : \mathbb{F}_{2^n} \rightarrow \mathbb{F}_{2^m}$ the trace function $\mathrm{tr}^n_m(x) = \sum_{i = 0}^{n/m - 1} x^{2^{mi}}$. If $m=1$ we  denote  $\mathrm{tr}^n_1$ by $\mathrm{tr}_n$.

The \textit{Walsh transform} of $F : \mathbb{F}_{2^n} \rightarrow \mathbb{F}_{2^n}$ is defined as $$W_F(a,b) = \sum_{x \in \mathbb{F}_{2^n}} (-1)^{ \tr_n(a x+bF(x))}$$ for $a,b \in \mathbb{F}_{2^n}$. The Boolean function $ F_b(x)= \tr_n(b F(x))$, for $b\in \mathbb{F}_{2^n}^*$, is called a {\em component function} of $F$. 
A component function $F_b$ satisfying $|W_F(a,b)|=2^{n/2}$ for any $a\in\mathbb{F}_{2^n}$ is called \textit{bent}. Bent functions are defined only for $n$ even.

Two $(n,n)$-functions $F$ and $G$ are said to be EA-equivalent if $G = A_1 \circ F \circ A_2 + A$ for affine $A_1, A_2, A : \mathbb{F}_{2^n} \rightarrow \mathbb{F}_{2^n}$ with $A_1, A_2$ bijective. 

We say that two $(n,n)$-functions $F$ and $G$ are CCZ-equivalent if there is an affine permutation $\mathcal{L}$ of $\mathbb{F}_{2^n}^2$ which maps the graph $\Gamma_F = \{ (x,F(x)) : x \in \mathbb{F}_{2^n} \}$ of $F$ to the graph $\Gamma_G$ of $G$. EA-equivalence is a special case of CCZ-equivalence, and CCZ-equivalence is more general than EA-equivalence \cite{BCP}. However, when we restrict to quadratic APN functions, we have that EA-equivalence coincides with CCZ-equivalence \cite{ccz-ea}. 

To prove that a family of APN functions is new, it is necessary to show that we can obtain instances of APN functions which are CCZ-inequivalent to those of the currently known APN families. A common approach is to compare some invariants, i.e. properties that are preserved under certain equivalence relations.

For a given map $F : \mathbb{F}_{2^n} \rightarrow \mathbb{F}_{2^n}$ we define the set
$$
\mathrm{NB}_F=\{b\in\mathbb{F}_{2^n}\,:\,W_F(a,b)=0 \text{ for some } a\in\mathbb{F}_{2^n}\},
$$
and the sequence
$$
N_F=[n_i(\mathrm{NB}_F)\,:\, 1\le i\le n ],
$$
where $n_i(S)$ is the number of $\mathbb{F}_2$-vector spaces of dimension $i$ in $S$.

For a quadratic APN function $F$, we have that $\mathrm{NB}_F$ is the set of non-bent components, that is,
$$
\mathrm{NB}_F=\{b\in\mathbb{F}_{2^n}\,:\,|W_F(0,b)|\ne 2^{n/2}\}.
$$

The sequence $N_F$ has been shown to be an EA-invariant \cite{seta18-1,golpav}.

When $n=2m$, we can identify $\mathbb{F}_{2^n}$ with $\mathbb{F}_{2^m}\times \mathbb{F}_{2^m}$. In this case a function
$F : \mathbb{F}_{2^n} \rightarrow \mathbb{F}_{2^n}$
can be represented as a (univariate) polynomial in  $\mathbb{F}_{2^n}[X]$, or a (bivariate) polynomial in $\mathbb{F}_{2^m}[x,y]\times \mathbb{F}_{2^m}[x,y]$. 

The idea of considering a bivariate form for constructing APN functions was firstly considered in \cite{CARLET1}. Here, Carlet considered functions $F$ defined over $\mathbb{F}_{2^{2m}}$ given by $F(x,y)=(f(x,y),g(x,y))$, with $f$ the Maiorana-McFarland bent function $xy$ from $\mathbb{F}_{2^m}\times\mathbb{F}_{2^m}$ to $\mathbb{F}_{2^m}$. 
Using this framework, Carlet introduced a family of APN functions that, as shown in \cite{CBC21}, coincides with the family of hexanomial APN functions given in \cite{BudCar08}.
Other two families, which consider  $f(x,y)=xy$, have been introduced later in \cite{tan19,pott}.

The aforementioned construction has been extended, recently, by G\"olo\v{g}lu in \cite{GOLOGLU}, where the author introduced the framework of biprojective polynomials.

In particular, let $q=2^k$, we call a polynomial of type $ax^{q+1} + bx^q + cx+d\in\mathbb{F}_{2^m}[x]$ \textit{projective}, and $ax^{q+1} + bx^q y + cxy^q + dy^{q+1}\in\mathbb{F}_{2^m}[x,y]$ a \textit{bivariate projective} (or biprojective) polynomial.

G\"olo\v{g}lu, in his work, introduced the following two families of APN functions using biprojective polynomials.

\begin{theorem}[\cite{GOLOGLU}]\label{th:gol}
The following functions are APN on $\mathbb{F}_{2^{2m}}$:
\begin{itemize}
\item[$\cF_1$:] If $\gcd(3k, m) = 1$, let $q=2^k$
$$
F(x,y)=(x^{q+1} +xy^{q}+y^{q+1},x^{q^2+1}+x^{q^2}y+y^{q^2+1});
$$
\item[$\cF_2$:] If $\gcd(3k, m) = 1$, $m$ odd, let $q=2^k$
$$
F(x,y)=(x^{q+1} +xy^{q}+y^{q+1},x^{q^3}y+xy^{q^3}).
$$
\end{itemize}
\end{theorem}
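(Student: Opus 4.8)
The plan is to verify the APN property of each family directly from the definition, by showing that for every nonzero direction the associated difference equation has at most two solutions. Since $F$ is quadratic, it suffices to count solutions of the linearized system obtained by computing $F(X+A)+F(X)+F(A)+F(0)$, which is $\mathbb{F}_2$-bilinear in $(X,A)$. Writing $X=(x,y)$ and a nonzero direction $A=(a,b)$, I would set the bilinear difference equal to a target value $(u,v)$ and prove that the number of solutions $(x,y)$ is $0$ or $2$; equivalently, I must show that the kernel of the linear map $X\mapsto F(X+A)+F(X)+F(A)+F(0)$ has dimension exactly $1$ (i.e.\ only $\{0,A\}$) for every nonzero $A$.

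For $\cF_1$, writing $q=2^k$, the first coordinate contributes the bilinear form $a x^{q}+a^{q}x \mapsto$ (from $x^{q+1}$), the cross terms $a y^{q}+b^{q}x$ (from $xy^{q}$), and $b y^{q}+b^{q}y$ (from $y^{q+1}$), and the second coordinate contributes the analogous $q^2$-versions. **The key step** is to reduce the resulting two equations (one projective in the variable $q$, one in $q^2$) to a statement about the projective polynomials themselves: after a Frobenius substitution I would aim to show that a common nonzero solution $(x,y)\ne(0,0),(a,b)$ forces an element $t=x/y$ (or $y/x$) to satisfy an equation whose solvability is controlled by $\gcd(3k,m)$. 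Concretely, the biprojective structure means each equation, upon dividing by $y^{q+1}$ and setting $t=x/y$, becomes a low-degree projective equation in $t$; combining the two (one of Frobenius-degree $k$, the other $2k$) should collapse to an equation of the form $t^{2^{3k}-1}=1$ or $t^{(2^{3k}-1)/(2^k-1)}=\text{something}$, so that the hypothesis $\gcd(3k,m)=1$ guarantees $\gcd(2^{3k}-1,2^m-1)=2^{\gcd(3k,m)}-1=1$ and hence $t=1$, i.e.\ $x=y$, which I then feed back to conclude $(x,y)\in\{(0,0),(a,b)\}$. **The main obstacle** will be handling the degenerate cases where $a=0$, $b=0$, or where the substitution $t=x/y$ is undefined, and verifying that no spurious solutions arise there; these boundary cases must be treated separately and carefully, since the gcd argument only governs the generic ratio.

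For $\cF_2$ the second coordinate is the symmetric form $x^{q^3}y+xy^{q^3}$ rather than a full biprojective polynomial, so the difference system couples a projective equation (from the first coordinate, in Frobenius-degree $k$) with a \emph{symmetric bilinear} equation $a^{q^3}y+b^{q^3}x+ay^{q^3}+bx^{q^3}=0$ of Frobenius-degree $3k$. Here I would solve the symmetric equation for the ratio first, using that over $\mathbb{F}_{2^m}$ with $m$ odd the map $z\mapsto z^{q^3}+z$ and the resulting equation in $t=x/y$ has controlled kernel, then substitute into the projective first-coordinate equation. The extra hypothesis that $m$ is odd is what rules out the unwanted solutions in this branch, and the interplay between $\gcd(3k,m)=1$ and the parity of $m$ is where the two assumptions jointly force uniqueness. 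I expect the technical heart of both proofs to be the same: reducing everything to the non-vanishing of $\gcd(2^{3k}-1,2^m-1)$ beyond the trivial factor, and then disposing of the handful of degenerate directions by hand.
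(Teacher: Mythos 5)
There is a genuine gap, and it sits exactly at what you call the key step. Biprojectivity makes $f$ itself homogeneous, $f(\lambda x,\lambda y)=\lambda^{q+1}f(x,y)$, but the APN condition is about the \emph{difference} $D_f^{[a,b]}(x,y)=f(x+a,y+b)+f(x,y)+f(a,b)$, which is only $\mathbb{F}_2$-linear (linearized) in $(x,y)$. The joint homogeneity $D_f^{[\lambda a,\lambda b]}(\lambda x,\lambda y)=\lambda^{q+1}D_f^{[a,b]}(x,y)$ can be spent only once, and since the solutions $(x,y)$ must stay free in order to be counted, it is spent normalizing the \emph{direction} to $u=a/b$ (this is G\"olo\v{g}lu's reduction to the three cases $E^0,E^\infty,E^u$, reproduced in the proof of Theorem \ref{th:main}). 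After that normalization the first equation has, for $\cF_1$, the shape $f(u,1)(x^q+x)=uy^q+(u+1)^qy$: it contains $x^q$, $x$, $y^q$ and $y$ simultaneously, so dividing by $y^{q+1}$ (or any power of $y$) leaves a mixture of $t=x/y$ and $y^{q-1}$ that never separates into ``a low-degree projective equation in $t$''. Consequently the promised collapse to $t^{2^{3k}-1}=1$ has no derivation, and the endgame is also unjustified: even if you reached $x=y$, that is not one of the two trivial solutions, and no argument is given for the ``feed back''. The same structural objection voids the sketch for $\cF_2$ (where, moreover, the kernel of $z\mapsto z^{q^3}+z$ on $\mathbb{F}_{2^m}$ is already $\mathbb{F}_{2^{\gcd(3k,m)}}=\mathbb{F}_2$ by the gcd hypothesis alone, so the oddness of $m$ must play a different role than the one you assign to it).

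For calibration: the paper never proves the cited theorem directly, but its Theorem \ref{th:main} contains $\cF_1$ (set $\alpha=1$) and its proof shows what replaces your missing step, namely \emph{elimination}, not projectivization, of $x$: since $x^{q^2}+x=(x^q+x)^q+(x^q+x)$, the two linearized equations combine into a single linearized equation $c_2y^{q^2}+c_1y^q+c_0y=0$ with coefficients depending on $u$; by Lemma \ref{lm:prop}(ii) its kernel is trivial iff the projective polynomial $c_2z^{q+1}+c_1z+c_0$ has no root in $\mathbb{F}_{2^m}$, and by the M\"obius invariance of Lemma \ref{lm:prop}(iii) this reduces to the no-root property of $z^{q+1}+z+1$. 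Your number-theoretic engine is the correct one, but it lives precisely there, one level deeper than where you placed it: a root $t$ of $z^{q+1}+z+1$ satisfies $t^{q^2+q}=t^q+1$, hence $t^{q^2+q+1}=t^{q+1}+t=1$, and $\gcd(q^2+q+1,2^m-1)$ divides $2^{\gcd(3k,m)}-1=1$, forcing $t=1$, which is not a root. So $\gcd(3k,m)=1$ controls the solvability of the direction-dependent projective polynomial obtained \emph{after} eliminating $x$; without that elimination bridge your plan has no route from the APN system to any equation the gcd argument applies to. Note finally that the $\cF_2$ half is proved nowhere in the paper, so there your sketch would have to stand entirely on its own, and as written it does not.
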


Later in \cite{LZLQ}, the authors considered family $\mathcal{F}_1$, with $k=1$, and showed that by adding the term $(xy,xy+x^2y^2)$ to it, we can obtain another family of APN functions. The resulting family is the following.

\begin{theorem}[\cite{LZLQ}]\label{th:lzlq}
Let $m>0$ such that $\gcd(3,m)=1$. Then, the function 
$$
F(x,y)=(x^{3}+xy +xy^{2}+y^{3},x^{5}+xy+x^2y^2+x^{4}y+y^{5})
$$
is APN.
\end{theorem}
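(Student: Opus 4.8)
Since $F=(f,g)$ is quadratic, it is APN precisely when, for every $(a,b)\in(\mathbb{F}_{2^m}\times\mathbb{F}_{2^m})\setminus\{(0,0)\}$, the polarized (derivative) system
\[
f(x+a,y+b)+f(x,y)+f(a,b)=0,\qquad g(x+a,y+b)+g(x,y)+g(a,b)=0
\]
admits only the solutions $(x,y)\in\{(0,0),(a,b)\}$. The first thing I would do is make these two conditions explicit. A direct expansion (using $(u+v)^{2^i}=u^{2^i}+v^{2^i}$) gives the equivalent system $B_1=B_2=0$, where
\[
B_1=a x^2+(a^2+b+b^2)x+(a+b)y^2+(a+b^2)y,
\]
\[
B_2=(a+b)x^4+b^2x^2+(a^4+b)x+b\,y^4+a^2y^2+(a+a^4+b^4)y.
\]
Both $B_1$ and $B_2$ are $\mathbb{F}_2$-linearized in $(x,y)$, and one checks directly that $(x,y)=(a,b)$ is always a common zero; hence $F$ is APN if and only if, for each nonzero $(a,b)$, the common zero set of $B_1,B_2$ is exactly $\{(0,0),(a,b)\}$, i.e. the kernel of the $\mathbb{F}_2$-linear map $(x,y)\mapsto(B_1,B_2)$ is one-dimensional.

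To analyze the kernel I would first dispose of the degenerate branches $a=0$, $b=0$, and $a=b$, where the coefficients collapse and the system reduces to one or two linearized equations in a single variable that can be handled by hand. In the generic branch $ab(a+b)\neq0$ I would eliminate the top Frobenius powers: squaring $B_1$ yields
\[
a^2x^4+(a+b)^2y^4=(a^2+b+b^2)^2x^2+(a+b^2)^2y^2,
\]
which, combined with the $x^4,y^4$ terms of $B_2$ and with $B_1$ itself (used to express $x^2$ in terms of $x,y,y^2$), removes the degree-$4$ part and leaves a single linearized equation that is of degree $1$ in $x$. Solving this for $x$ as a linearized function of $y$ and substituting back into $B_1$, I reduce the whole problem to a single linearized equation $Q(y)=0$ over $\mathbb{F}_{2^m}$ whose root space already contains $\{0,b\}$; the goal becomes to show it contains nothing more.

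The main obstacle is exactly this last step, because the extra terms $(xy,\,xy+x^2y^2)$ added to the biprojective family $\cF_1$ destroy the homogeneity that makes $\cF_1$ reduce cleanly on the projective line, so $Q$ carries genuine lower-order perturbations rather than being a pure norm-type equation. The key is to organize the elimination so that, after dividing out the forced part corresponding to $y\in\{0,b\}$, the remaining condition becomes an equation of the shape $z^{2^3-1}=1$ (equivalently, an assertion that $z\mapsto z^7$ takes a prescribed value only once). Since $\gcd(2^3-1,2^m-1)=2^{\gcd(3,m)}-1=1$ under the hypothesis $\gcd(3,m)=1$, the map $z\mapsto z^{7}$ is a permutation of $\mathbb{F}_{2^m}$, which forces $Q$ to have no roots beyond $\{0,b\}$ and hence the kernel to be one-dimensional. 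I expect the bookkeeping of collecting the perturbation terms into precisely such a $z^{7}$-condition, together with the finitely many degenerate cases, to be the technical heart of the argument.
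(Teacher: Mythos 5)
Your setup is correct: your $B_1,B_2$ agree exactly (at $\alpha=1$) with the system $G=H=0$ that the paper derives in the proof of Theorem~\ref{th:main2}, of which this statement is the special case $\alpha=1$ (since $x^3+x+1$ has no root in $\mathbb{F}_{2^m}$ precisely when $\gcd(3,m)=1$), and you are right that APNness amounts to the kernel of the $\mathbb{F}_2$-linear map $(x,y)\mapsto(B_1,B_2)$ being exactly $\{(0,0),(a,b)\}$. But from that point on your text is a plan rather than a proof, and its pivotal claim is unsubstantiated: you assert that after eliminating $x$ the ``no roots beyond $\{0,b\}$'' condition can be packaged as an equation of shape $z^7=c$, so that $\gcd(7,2^m-1)=1$ finishes the argument. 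No derivation of this is given, and it does not match what the elimination actually produces. In the paper's proof, the resultant of $G$ and $H$ with respect to $y$ factors as $(a^3+ab^2+\alpha b^3)^2\,x(x+a)\,R(x,a,b)\,R(x+a,a,b)$, where $R$ is a \emph{cubic} in $x$, and the entire difficulty is to show this cubic has no root in $\mathbb{F}_{2^m}$. That is done via Williams' criterion (Lemma~\ref{cubic}): one verifies a trace condition $\tr_m\left(A^3/(\alpha^2B^2)\right)=\tr_m(1)$ and then shows the two roots $t_1,t_2$ of the quadratic resolvent are non-cubes, which rests on Lemma~\ref{lemma1} and the explicit identity $(\beta+\omega^2)C_1+B_1=(\hat a+(\beta+\omega)b+\beta+\omega^2)^3$. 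Nothing in this mechanism is a monomial-permutation ($z^7$) argument: the hypothesis enters as ``$x^3+x+1$ has no root in $\mathbb{F}_{2^m}$,'' used separately to kill the factor $a^3+ab^2+b^3$, to feed the trace and cube conditions, and again in the cases $x=0$ and $x=a$. While $\gcd(3,m)=1$ is indeed \emph{equivalent} to $z\mapsto z^7$ permuting $\mathbb{F}_{2^m}$, equivalence of hypotheses is not a proof mechanism; you would need to exhibit the factorization of the eliminant into a forced part and a genuine degree-$7$ norm-type condition, and there is no evidence such a factorization exists.

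There are also secondary gaps you would need to close even for the skeleton: solving your ``degree-$1$-in-$x$'' equation for $x$ requires handling the vanishing of its leading coefficient (extra subcases beyond $a=0$, $b=0$, $a=b$; note the paper's degenerate split is by $x=0$ versus $x=a$, not by those coefficient conditions); and with an ad hoc elimination, unlike with a resultant, one must check both that solutions of the one-variable eliminant lift to solutions of the full system and that no solutions are lost. In short, the reduction to the two-variable system is right, but the heart of the theorem---why $\gcd(3,m)=1$ forces the kernel to be one-dimensional---is exactly the part left unproved, and the mechanism you conjecture for it is not the one that works.
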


In the following section we introduce two new families constructed using biprojective polynomials and, thereafter, by adding certain quadratic terms to it. These two families can be defined for any value $m$. Our constructions include family $\cF_1$ and the functions in Theorem \ref{th:lzlq}.

\section{A new family of  biprojective APN functions over $\mathbb{F}_{2^{2m}}$}
In the following lemma we give some properties on projective polynomials that we use along the paper. Some of these facts can be found also in \cite{GOLOGLU}, we report the proof for completness.

\begin{lemma}\label{lm:prop}
Let $k,m>0$ with $\gcd(k,m)=1$, $q=2^k$ and $\alpha\in \mathbb{F}_{2^m}$. Let us denote by $\phi_{q,\alpha}(x)=x^{q+1}+x+\alpha$.
\begin{itemize}
    \item[(i)] $\phi_{q,\alpha}(x)$ has no root over $\mathbb{F}_{2^m}$ if and only if $\phi_{q,\alpha^2}(x)$  has no root over $\mathbb{F}_{2^m}$. The same is verified also for $\alpha^q x^{q+1}+x+\alpha$.
    \item[(ii)] $ax^{q+1}+bx+c$ has no root over $\mathbb{F}_{2^m}$ if and only if $ax^{q^2}+bx^q+cx$ is a permutation.
    \item[(iii)] Let $a,b,c,d\in \mathbb{F}_{2^m}$. Suppose that $\phi_{q,\alpha}(x)$ has no root over $\mathbb{F}_{2^m}$. Then, if $ad-bc\ne 0$, 
    $$
    \phi(x)=(ax+b)^{q+1}\phi_{q,\alpha}\left(\frac{cx+d}{ax+b}\right)
    $$
    has no root over $\mathbb{F}_{2^m}$.
\end{itemize}
\end{lemma}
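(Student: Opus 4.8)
The three parts are essentially independent, so I would treat them in turn, and I expect part (ii) to carry the real content.

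For part (i), the plan is to exploit that the Frobenius map is an automorphism of $\mathbb{F}_{2^m}$. Squaring the identity $\beta^{q+1}+\beta+\alpha=0$ gives $(\beta^2)^{q+1}+\beta^2+\alpha^2=0$, so $\beta\mapsto\beta^2$ sends roots of $\phi_{q,\alpha}$ bijectively to roots of $\phi_{q,\alpha^2}$; since squaring is a bijection of $\mathbb{F}_{2^m}$, the two polynomials have roots simultaneously, which is exactly the first claim. For the second assertion I would reduce $\alpha^q x^{q+1}+x+\alpha$ to a $\phi$-type polynomial by the substitution $x=\alpha^{-1}z$ (for $\alpha\neq0$; the case $\alpha=0$ is immediate, since both polynomials vanish at $0$), which a direct computation turns into the scalar multiple $\alpha^{-1}(z^{q+1}+z+\alpha^2)=\alpha^{-1}\phi_{q,\alpha^2}(z)$. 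Hence $\alpha^q x^{q+1}+x+\alpha$ has no root if and only if $\phi_{q,\alpha^2}$, and therefore by the first part $\phi_{q,\alpha}$, has no root.

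For part (ii), write $P(x)=ax^{q+1}+bx+c$ and $L(x)=ax^{q^2}+bx^q+cx$; the latter is an $\mathbb{F}_2$-linearized polynomial, so it is a permutation of $\mathbb{F}_{2^m}$ exactly when its kernel is trivial. The key algebraic identity is $L(x)=x\,P(x^{q-1})$, valid for every $x\in\mathbb{F}_{2^m}^*$, which one verifies by direct expansion using $(q-1)(q+1)=q^2-1$. The decisive number-theoretic input is that $\gcd(k,m)=1$ forces $\gcd(q-1,2^m-1)=2^{\gcd(k,m)}-1=1$, so $x\mapsto x^{q-1}$ permutes $\mathbb{F}_{2^m}^*$. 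Consequently $L$ has a nonzero root if and only if $P$ has a root $t=x^{q-1}\in\mathbb{F}_{2^m}^*$, and as $x^{q-1}$ ranges over all of $\mathbb{F}_{2^m}^*$ this is equivalent to $P$ having a root in $\mathbb{F}_{2^m}$. I expect the one point needing care here is matching ``$P$ has no root'' with ``$L$ is a permutation'' at the boundary value $t=0$, where I would use the standing nondegeneracy so that $t=0$ does not produce a spurious root, and then conclude the equivalence.

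For part (iii), I would first clear denominators to exhibit $\phi$ as an honest polynomial, namely $\phi(x)=(cx+d)^{q+1}+(cx+d)(ax+b)^q+\alpha(ax+b)^{q+1}$, which agrees with $(ax+b)^{q+1}\phi_{q,\alpha}\bigl(\tfrac{cx+d}{ax+b}\bigr)$ wherever $ax+b\neq0$. Then I argue by contradiction: suppose $\phi(x_0)=0$ with $x_0\in\mathbb{F}_{2^m}$. If $ax_0+b\neq0$, dividing by $(ax_0+b)^{q+1}$ shows that $\phi_{q,\alpha}$ has the root $\tfrac{cx_0+d}{ax_0+b}\in\mathbb{F}_{2^m}$, contradicting the hypothesis. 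If $ax_0+b=0$, the polynomial form collapses to $\phi(x_0)=(cx_0+d)^{q+1}$, forcing $cx_0+d=0$ as well; but $ax_0+b=cx_0+d=0$ (treating $a=0$ and $a\neq0$ separately, with $x_0=b/a$ giving $cx_0+d=(ad+bc)/a$) yields $ad-bc=0$, contradicting $ad-bc\neq0$. Hence $\phi$ has no root. Among the three, part (ii) is the main obstacle, as it is the only one requiring the nontrivial fact that $x\mapsto x^{q-1}$ is a bijection together with careful bookkeeping of the zero case; parts (i) and (iii) reduce to elementary substitutions once the right change of variable is spotted.
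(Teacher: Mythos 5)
Your proposal is correct and takes essentially the same approach as the paper: part (i) via the squaring/Frobenius substitution and the scaling $x\mapsto \alpha x$, part (ii) via the identity $xP(x^{q-1})=L(x)$ combined with the bijectivity of $x\mapsto x^{q-1}$ on $\mathbb{F}_{2^m}^*$ (from $\gcd(q-1,2^m-1)=1$), and part (iii) by clearing denominators and handling the case $ax+b=0$ through $ad-bc\neq 0$. The only differences are cosmetic (the paper substitutes $x\mapsto x^{2^{m-1}}$ and squares rather than squaring a root directly, and both treatments implicitly identify roots of $P$ with nonzero kernel elements of $L$, which is harmless since in all applications the constant term is nonzero).
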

\begin{proof}
\noindent {\bf (i):}

We have immediately that 
$$
\phi_{q,\alpha}(x^{2^{m-1}})^2=\phi_{q,\alpha^2}(x).
$$
So, $\phi_{q,\alpha}(x)$ admits roots in $\mathbb{F}_{2^m}$ if and only if $\phi_{q,\alpha^2}(x)$ does.

Similar, we have $1/\alpha\ \phi_{q,\alpha^2}(\alpha x)=\alpha^q x^{q+1}+x+\alpha$.

\noindent {\bf (ii):}

Since $\gcd(k,m)=1$, we have that $x^{q-1}$ permutes $\mathbb{F}_{2^m}$. Let $p(x)=ax^{q+1}+bx+c$, then
$$
xp(x^{q-1})=ax^{q^2}+bx^q+cx.
$$
So there is a 1-to-1 correspondence between the roots of $p$ and the nonzero elements in the kernel of $ax^{q^2}+bx^q+cx$.

\noindent {\bf (iii):} 
We can note that $\phi$ is a polynomial,
$$
\phi(x)=(cx+d)^{q+1}+(cx+d)(ax+b)^{q}+\alpha(ax+b)^{q+1}.
$$
If $ax+b\ne 0$, then $\phi$ is the product of two nonzero elements. 

For $ax+b= 0$, that is $x=\frac{b}{a}$, we have that $\phi(x)=(cx+d)^{q+1}=\left(c\frac{b}{a}+d\right)^{q+1}$, which is different from zero.
\end{proof}

{We present now the main result of this section, which extends the mentioned family $\cF_1$.}

\begin{theorem}[Family F1]\label{th:main}
Let $k,m>0$, with $\gcd(k,m)=1$. Let $q=2^k$, and let $F:{\Bbb F}_{2^m}\times{\Bbb F}_{2^m}\rightarrow{\Bbb F}_{2^m}\times{\Bbb F}_{2^m}$ given by
$$F(x,y)=(f(x,y),g(x,y))=(x^{q+1}+xy^q+\alpha y^{q+1}, x^{q^2+1}+\alpha x^{q^2}y+(1+\alpha)^qxy^{q^2}+\alpha y^{q^2+1}),$$
with $\alpha\in\mathbb{F}_{2^m}$ such that $ x^{q+1}+x+\alpha$ has no root in $\mathbb{F}_{2^m}$. Then, $F$ is APN.
\end{theorem}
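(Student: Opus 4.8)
The plan is to use the standard characterization of APN-ness for quadratic functions via the associated symmetric bilinear form. Since $F(0,0)=(0,0)$ and $F$ is quadratic, for a nonzero direction $(a,b)$ the derivative $(x,y)\mapsto F((x,y)+(a,b))+F(x,y)$ is affine, and its linear part is the polarization $B_F=(B_f,B_g)$, where $B_h((a,b),(x,y))=h((x,y)+(a,b))+h(x,y)+h(a,b)$. The function $F$ is APN precisely when, for every $(a,b)\neq(0,0)$, the $\mathbb{F}_2$-linear system $B_f=B_g=0$ admits only the two solutions $(x,y)\in\{(0,0),(a,b)\}$; equivalently, its solution space, which always contains these two points, has $\mathbb{F}_2$-dimension at most one.

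First I would polarize each monomial to obtain the explicit linearized equations
$$B_f=(a+b)^q x+a\,x^q+\alpha b^q y+(a+\alpha b)\,y^q,$$
$$B_g=a^{q^2}x+(a+\alpha b)\,x^{q^2}+\alpha(a+b)^{q^2}y+\bigl((1+\alpha)^q a+\alpha b\bigr)y^{q^2},$$
which are additive in $(x,y)$ and involve only the Frobenius powers $1,q,q^2$. The recurrence of the coefficient $a+\alpha b$ in both equations is the structural feature that drives the reduction.

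Next I would dispose of the degenerate directions. If $b=0$ (resp.\ $a=0$), the system collapses to a pair of linearized equations whose nontrivial kernel would, after dividing by $a$ (resp.\ $b$) and substituting a power $x^{q-1}$, produce a root of a projective polynomial; Lemma \ref{lm:prop}(ii), together with the hypothesis that $\phi_{q,\alpha}(x)=x^{q+1}+x+\alpha$ has no root and part (i), forces the kernel to be exactly $\{(0,0),(a,0)\}$ (resp.\ $\{(0,0),(0,b)\}$). For the main case $a,b\neq 0$, I would raise the first equation to the $q$-th power to create $x^{q^2}$- and $y^{q^2}$-terms, then eliminate $x^{q^2}$ against the second equation using the common coefficient $a+\alpha b$; iterating this elimination removes the top Frobenius powers and reduces the whole system to a single equation in the unknowns. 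After normalizing by an appropriate power of one unknown, this equation takes the shape of $\phi_{q,\alpha}$ evaluated at a Möbius transform $\frac{cx+d}{ax+b}$ of the ratio of the unknowns, with the relevant determinant $ad-bc$ expressible through $a,b,\alpha$ and nonzero by the no-root hypothesis. Lemma \ref{lm:prop}(iii), and part (i) to absorb the Frobenius twists $\alpha\mapsto\alpha^2$ and the companion form $\alpha^q x^{q+1}+x+\alpha$, then guarantees this transformed polynomial has no root in $\mathbb{F}_{2^m}$, which is exactly the obstruction preventing any solution outside $\{(0,0),(a,b)\}$.

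The main obstacle I expect is precisely this elimination in the generic case: the three distinct Frobenius powers make the system a genuine linearized system rather than a single projective equation, and carrying out the elimination so that it lands exactly on $\phi_{q,\alpha}$ (or one of its Möbius and Frobenius transforms) without spurious factors requires careful bookkeeping, together with splitting into sub-cases according to whether auxiliary coefficients such as $a+\alpha b$ vanish. Matching the resulting single equation to the no-root hypothesis through Lemma \ref{lm:prop} is the crux on which the whole argument turns.
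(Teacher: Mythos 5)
Your outline has the same skeleton as the paper's proof --- polarize, handle the degenerate directions with parts (i)--(ii) of Lemma \ref{lm:prop}, then eliminate the generic case down to one variable and invoke part (iii) --- but the step you defer as ``careful bookkeeping'' is the entire content of the proof, and the route you describe to it would fail as stated. The structural problem: you work in the original coordinates, where $(x,y)=(a,b)$ is a known nonzero solution of $B_f=B_g=0$. Any single equation produced by eliminating between these two must still vanish at that solution, so it \emph{cannot} be a root-free transform of $\phi_{q,\alpha}$; either the known roots must be factored off explicitly, or one must first change coordinates so that the known solution is sent to a point where the surviving unknown vanishes. Moreover, your proposed normalization --- dividing by a power of one unknown to obtain $\phi_{q,\alpha}$ evaluated at the ratio of the unknowns --- is not available here: the polarized equations are additive (linearized), not biprojective, so they are not homogeneous in $(x,y)$ and do not descend to polynomials in $x/y$. (A smaller slip on top of this: the coefficient of $x$ in $B_g$ is $a^{q^2}+(1+\alpha)^q b^{q^2}$, not $a^{q^2}$.) Finally, the iterated elimination you sketch does not obviously terminate cleanly: removing $x^{q^2}$ and then $x^q$ leaves $x$ linearly, and substituting back into an equation containing $x^q$ reintroduces terms of degree $q^3$ in $y$.

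The paper resolves all of this with a preliminary substitution you are missing: for $vw\ne0$ it replaces $(x,y)$ by $(vx,\,wx+wy)$ and divides by $w^{q+1}$, resp.\ $w^{q^2+1}$, which sends the nontrivial solution to $(1,0)$ and, crucially, makes $x$ enter the two equations only through $x^q+x$ and $x^{q^2}+x$. The identity $x^{q^2}+x=(x^q+x)^q+(x^q+x)$ then eliminates $x$ in a single step, and --- after the nontrivial coefficient identities $f_u^q(u+\alpha)+g_uu^q=\alpha^qf_u$, $f_u(u+1)^{q^2}+f_u^qu=g_u$, $f_u(u^{q^2}+1+\alpha^q)+g_u(u+1)^q=\alpha f_u^q$, where $u=v/w$, $f_u=f(u,1)$, $g_u=g(u,1)$ --- one is left with $\alpha^qf_u^2y^{q^2}+g_u^2y^q+\alpha f_u^{2q}y=0$, which must now have only the root $y=0$, i.e.\ be a permutation. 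Only at this point does Lemma \ref{lm:prop} enter in the way you intend: part (ii) converts the permutation claim into the no-root condition for $\alpha^qf_u^2y^{q+1}+g_u^2y+\alpha f_u^{2q}$, part (i) strips the Frobenius twist, and the explicit factorization $f_uy^{q+1}+g_uy+\alpha f_u^q=\left(y+(u+1)^q\right)^{q+1}\phi_{q,\alpha}\!\left(\frac{uy+\alpha}{y+(u+1)^q}\right)$, whose determinant is $f_u\ne0$, lets part (iii) conclude. Your sketch names the right lemma, but the normalization, the eliminating identity, the coefficient identities and the factorization --- the four ingredients that make the argument close --- are absent.
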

\begin{proof}
First of all, notice that from the restriction on $\alpha$ we easily deduce that $\alpha\ne0$.
Since $F$ is quadratic, we need to check the number of solutions of the equation
\begin{equation}\label{eq:der}
F(x+v,y+w)+F(x,y)+F(v,w)=0,
\end{equation}
for any $(v,w)\in{\Bbb F}_{2^m}\times{\Bbb F}_{2^m}\setminus\{(0,0)\}$.
Let $D_f^{[v,w]}(x,y)=f(x+v,y+w)+f(x,y)+f(v,w)$ and $D_g^{[v,w]}=g(x+v,y+w)+g(x,y)+g(v,w)$. As in \cite{GOLOGLU}, we can define 
\begin{align*}
  E_f^0(x,y)=&\frac{D_f^{[0,w]}(wx,wy)}{w^{q+1}}=x+\alpha (y^q+y),  \\
  E_g^0(x,y)=&\frac{D_g^{[0,w]}(wx,wy)}{w^{q^2+1}}=\alpha x^{q^2}+(\alpha+1)^qx+\alpha (y^{q^2}+y),\\
  E_f^\infty(x,y)=&\frac{D_f^{[v,0]}(vx,vy)}{v^{q+1}}=x^q+x+y^q,\\
  E_g^\infty(x,y)=&\frac{D_g^{[v,0]}(vx,vy)}{v^{q^2+1}}= x^{q^2}+x+(\alpha+1)^qy^{q^2}+\alpha y,
\end{align*}
and for the case $vw\ne 0$, denoting by $u=v/w$,
\begin{align*}
   E_f^u(x,y)=&\frac{D_f^{[v,w]}(vx,wy+wx)}{w^{q+1}}=f(u,1)(x^q+x)+(u+\alpha) y^q+\alpha y, \\
   E_g^u(x,y)=&\frac{D_g^{[v,w]}(vx,wy+wx)}{w^{q^2+1}}=g(u,1)(x^{q^2}+x)+((\alpha+1)^qu+\alpha) y^{q^2}+\alpha(u+1)^{q^2} y.
\end{align*}

As shown in \cite{GOLOGLU}, checking the number of solutions of \eqref{eq:der} is equivalent to checking the number of solutions of $E_f^0(x,y)=E_g^0(x,y)=0$, $E_f^\infty(x,y)=E_g^\infty(x,y)=0$, and $E_f^u(x,y)=E_g^u(x,y)=0$, depending on the values of $v$ and $w$.

Let us consider the first case, that is, $E_f^0(x,y)=E_g^0(x,y)=0$. In this case we have
$$
\frac{x}{\alpha}= y^q+y,
$$
$$
\alpha x^{q^2}+(\alpha+1)^qx+\alpha (y^{q^2}+y)=\alpha x^{q^2}+(\alpha+1)^qx+\alpha [(y^{q}+y)^q+(y^{q}+y)]=0.
$$
Therefore,
$$
\begin{aligned}
0=&\alpha x^{q^2}+(\alpha+1)^qx+\alpha (y^{q^2}+y)\\
=&\alpha x^{q^2}+\frac{x^q}{\alpha^{q-1}}+\alpha^qx.\\
\end{aligned}
$$

Substituting $x\mapsto\alpha x$ we get
$$
\begin{aligned}
0=&\alpha^{q^2+1} x^{q^2}+\alpha{x^q}+\alpha^{q+1}x\\
=&\alpha^{q^2} x^{q^2}+{x^q}+\alpha^{q}x.
\end{aligned}
$$

From Lemma \ref{lm:prop}, we have that $\alpha^{q^2} x^{q^2}+{x^q}+\alpha^{q}x$ permutes $\mathbb{F}_{2^m}$. Then, $x=0$, implying $ y^q+y=0$, and thus $y\in\mathbb{F}_2$. So, in this case we have two solutions in $\{0\}\times\mathbb{F}_2$.

Consider now the case $E_f^\infty(x,y)=E_g^\infty(x,y)=0$. Then,
$$
y^q=x^q+x
$$
and
$$
0=x^{q^2}+x+(\alpha+1)^qy^{q^2}+\alpha y.
$$
Therefore,
$$
\begin{aligned}
0=&y^{q^2}+y^q+(\alpha+1)^qy^{q^2}+\alpha y\\
=&\alpha^q y^{q^2}+y^q+\alpha y.\\
\end{aligned}
$$
Applying Lemma \ref{lm:prop} we have that $\alpha^q y^{q^2}+y^q+\alpha y$ is a permutation. Therefore, we have two solutions $(x,y)\in\mathbb{F}_2\times\{0\}$.

Now, we consider the general case $vw\ne 0$ and thus the case $E_f^u(x,y)=E_g^u(x,y)=0$, with $u\ne0$.
In the following, we will denote by $f_u=f(u,1)=u^{q+1}+u+\alpha$ and by $g_u=g(u,1)=u^{q^2+1}+\alpha u^{q^2}+u+\alpha^qu+\alpha$.
From the restriction on $\alpha$ we have $f_u\ne0$ for any $u$.

We report equations $E_f^u(x,y)=0$ and  $E_g^u(x,y)=0$.
\begin{align*}
   f_u(x^q+x)=&{(u+\alpha)}y^q+{\alpha} y\\
   =&\left(f_u+{u^{q+1}}\right)y^q+\left(f_u+{u(u+1)^q}\right)y\\
    g_u(x^{q^2}+x)=&{((1+\alpha)^qu+\alpha)}y^{q^2}+{\alpha(u^{q^2}+1)}y\\
    =&\left(g_u+{u^{q^2}(u+\alpha)}\right)y^{q^2}+\left(g_u+{u(u^{q^2}+1+\alpha^q)}\right)y
\end{align*}
We perform  a change of variables. First we substitute $x\mapsto x+y$, then we substitute $y\mapsto u^{-1}y$. 
Therefore, we obtain
\begin{align}
  f_u (x^q+x)
   =&uy^q+{(u+1)^q}y,\label{eq:A} \\
    g_u(x^{q^2}+x)=&({u+\alpha})y^{q^2}+({u^{q^2}+1+\alpha^q})y. \label{eq:B}
\end{align}
Since $x^{q^2}+x=(x^q+x)^q+(x^q+x)$, then 
$f_u^{q+1}g_u(x^{q^2}+x)=f_ug_u\left(f_u(x^q+x)\right)^q+f_u^qg_u\left(f_u(x^q+x)\right)$.
Thus,  using  Eqs.\ \eqref{eq:A} and \eqref{eq:B}, we obtain the  following:
\begin{align*}
0=&f_u^{q+1}\left(({u+\alpha})y^{q^2}+({u^{q^2}+1+\alpha^q})y\right)
+f_ug_u\left(uy^q+{(u+1)^q}y\right)^q+f_u^qg_u\left(uy^q+{(u+1)^q}y\right)\\
=&f_uy^{q^2}\left(f_u^q(u+\alpha)+g_uu^q\right)
+g_uy^q\left(f_u(u+1)^{q^2}+f_u^qu\right)\\
&+f_u^qy\left(f_u(u^{q^2}+1+\alpha^q)+g_u(u+1)^q\right).
\end{align*}
By explicitly computing $f_u$ and $g_u$, we obtain the following equalities.
\begin{align*}
    & f_u^q(u+\alpha)+g_uu^q=\\
    &\quad=
    (u^{q^2+q}+u^q+\alpha^q)(u+\alpha)+(u^{q^2+1}+u^{q^2}\alpha+u\alpha^q+u+\alpha)u^q\\
    &\quad=u^{q^2+q+1}+u^{q+1}+\alpha^qu+\alpha u^{q^2+q}+\alpha u^q+\alpha^{q+1}+ \\
   & \quad\quad+u^{q^2+q+1}+\alpha u^{q^2+q}+\alpha^q u^{q+1}+u^{q+1}+\alpha u^q\\
    &\quad=\alpha^q(u^{q+1}+u+\alpha)=\alpha^qf_u,\\
    &f_u(u+1)^{q^2}+f_u^qu=\\
    &\quad=(u^{q+1}+u+\alpha)(u^{q^2}+1)+(u^{q^2+q}+u^q+\alpha^q)u\\
    &\quad=u^{q^2+q+1}+u^{q^2+1}+\alpha u^{q^2}+u^{q+1}+u+\alpha+ u^{q^2+q+1}+u^{q+1}+\alpha^qu\\
    &\quad= u^{q^2+1}+u^{q^2}\alpha+u\alpha^q+u+\alpha= g_u,\\
    & f_u(u^{q^2}+1+\alpha^q)+g_u(u+1)^q=\\
     &\quad= (u^{q+1}+u+\alpha)(u^{q^2}+1+\alpha^q)+(u^{q^2+1}+u^{q^2}\alpha+u\alpha^q+u+\alpha)(u^q+1)\\
     &\quad= u^{q^2+q+1}+u^{q^2+1}+\alpha u^{q^2}+u^{q+1}+u+\alpha+\alpha^qu^{q+1}+\alpha^qu+\alpha^{q+1}+\\
     &\quad\quad+u^{q^2+q+1}+\alpha u^{q^2+q}+\alpha^qu^{q+1}+u^{q+1}+\alpha u^q+u^{q^2+1}+\alpha u^{q^2}+\alpha^qu+u+\alpha\\
     &\quad= \alpha(u^{q^2+q}+u^q+\alpha^q)=
     \alpha f_u^q.
\end{align*}

    
Therefore, we obtain  the  equation
\begin{equation}\label{eq:a}
        \alpha^qf_u^2y^{q^2}+g_u^2y^q+\alpha f_u^{2q}y=0.
\end{equation}

 We will show that 
$$
L(y)=\alpha^qf_u^2y^{q^2}+g_u^2y^q+\alpha f_u^{2q}y
$$
is a permutation. This is equivalent (see Lemma \ref{lm:prop}) to show that we have no root in $\mathbb{F}_{2^m}$ for 
$$
\phi(y)=\alpha^qf_u^2y^{q+1}+g_u^2y+\alpha f_u^{2q}.
$$

Now, we have
$$
\phi'(y)=\phi(y^2)^{2^{m-1}}=\alpha^{q2^{m-1}}f_uy^{q+1}+g_uy+\alpha^{2^{m-1}} f_u^{q},
$$
substituting $y\mapsto y/\alpha^{2^{m-1}}$ and multiplying by $\alpha^{2^{m-1}}$ we get
$$
\phi''(y)=f_uy^{q+1}+g_uy+\alpha f_u^{q}.
$$

By computation, we have that 
$$
\begin{aligned}
\phi''(y)=&(y+(u+1)^q)^{q+1}\phi_{q,\alpha}\left(\frac{uy+\alpha}{y+(u+1)^q}\right).\\
\end{aligned}
$$
Since 
$$
\left|
\begin{array}{cc}
  1   & (u+1)^q \\
 u    & \alpha
\end{array}\right|=f_u\ne 0,
$$
from Lemma \ref{lm:prop}, we obtain that $\phi''(y)$ has no root in $\mathbb{F}_{2^m}$.

Therefore, the only solution of \eqref{eq:a} is $y=0$ and from Eq.\ \eqref{eq:A} we obtain two solutions $(x,y)\in\mathbb{F}_2\times\{0\}$.
\end{proof}

The existence of our functions is based on the existence of elements $\alpha$ for which the projective polynomial $\phi_{q,\alpha}(x)$ has no roots.

Projective polynomials and their roots have been studied in several works. Bluher's paper \cite{BLU} has become a standard reference on this topic. In particular, from the results in  \cite{BLU}, we have that for any $m$ and $k$, with $\gcd(k,m)=1$, there exists $\alpha\in \mathbb{F}_{2^m}$ such that $x^{2^k+1}+x+\alpha$ has no root in $\mathbb{F}_{2^m}$ (see also \cite{HK} for existence of these elements).

Family F1 defined in Theorem \ref{th:main} contains family $\mathcal{F}_1$ given in Theorem \ref{th:gol}. Indeed, the projective polynomial $x^{2^k+1}+x+1$ has no root over $\mathbb{F}_{2^m}$ if and only if $\gcd(3,m)=1$. Therefore, our family generalizes family $\cF_1$, since our construction does not require any restriction on $m$.

\section{A new family of APN function from Dillon's method}

In this section we  show that, as for the case of family $\mathcal{F}_1$, adding certain quadratic terms to our functions, we can obtain another family of APN functions. { Before that, we first give two useful lemmas. }

{	\begin{lemma}[\cite{williams1975note}]	\label{cubic}
	
	Let  $a,b\in\fm$, $b\neq0$ and define 
	$$f(z)=z^3+az+b,\quad h(t)=t^2+bt+a^3.$$ 
	Let $t_1, t_2$ be two solutions of $h(t)=0$ in $\mathbb{F}_{2^{2m}}$. Then: 
	\begin{enumerate}[(1)]
		\item $f$ has three zeros in $\fm$ if and only if $\tr_m\left(\frac{a^3}{b^2}\right)=\tr_m(1)$ and $t_1, t_2$ are cubes in $\fm$ (resp. $\mathbb{F}_{2^{2m}}$) when $m$ is even (resp. odd);
		\item $f$ has exactly one zero in $\fm$ if and only if $\tr_m\left(\frac{a^3}{b^2}\right)\neq\tr_m(1)$;
		\item $f$ has no zeros in $\fm$ if and only if $\tr_m\left(\frac{a^3}{b^2}\right)=\tr_m(1)$ and $t_1, t_2$ are not cubes in $\fm$ (resp. $\mathbb{F}_{2^{2m}}$) when $m$ is even (resp. odd).
	\end{enumerate}
\end{lemma}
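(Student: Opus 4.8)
The plan is to reduce the root count of $f$ to the cycle type of the Frobenius map acting on its roots, and to detect that cycle type through a Lagrange resolvent whose cube is exactly the quadratic $h$. I will work throughout in $\overline{\mathbb{F}_2}$ and only at the end read off in which of $\fm,\mathbb{F}_{2^{2m}}$ the relevant elements live.

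First I record separability. In characteristic $2$ we have $f'(z)=z^2+a=(z+a^{2^{m-1}})^2$, whose unique zero is $z_0=a^{2^{m-1}}$, and $f(z_0)=a^{3/2}+a^{3/2}+b=b\neq0$. Hence $f$ has three distinct roots $z_1,z_2,z_3\in\overline{\mathbb{F}_2}$. Since the coefficient of $z^2$ vanishes, $z_1+z_2+z_3=0$, so two roots lying in $\fm$ forces the third to lie there too. Thus the number of roots of $f$ in $\fm$ is $0$, $1$, or $3$, and these correspond respectively to the Frobenius $\sigma\colon x\mapsto x^{2^m}$ acting on $\{z_1,z_2,z_3\}$ as a $3$-cycle, a transposition, or the identity.

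Next I introduce the resolvent. Fix $\omega\in\mathbb{F}_4$ with $\omega^2+\omega+1=0$ and set $\theta=z_1+\omega z_2+\omega^2 z_3$ and $\bar\theta=z_1+\omega^2 z_2+\omega z_3$. Using the elementary symmetric functions $e_1=0,\ e_2=a,\ e_3=b$ and the relation $\omega^2+\omega=1$, a direct expansion gives $\theta\bar\theta=a$ and $\theta+\bar\theta=z_1$; combined with $z_1^3+az_1=b$ this yields $\theta^3+\bar\theta^3=z_1(z_1^2+a)=b$ and $\theta^3\bar\theta^3=a^3$. Therefore $t_1:=\theta^3$ and $t_2:=\bar\theta^3$ are precisely the roots of $h(t)=t^2+bt+a^3$, and the three roots of $f$ are recovered as $u+v,\ \omega u+\omega^2 v,\ \omega^2 u+\omega v$ for any cube roots $u,v$ of $t_1,t_2$ normalized by $uv=a$.

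With this dictionary the three cases correspond to the three Frobenius cycle types. Completing the square (substituting $t=bs$ and dividing by $b^2$) shows $t_1,t_2\in\fm$ iff $\tr_m(a^3/b^2)=0$. Computing how $\sigma$ acts on $\theta$—it permutes $\{\theta,\bar\theta\}$ up to a factor $\omega^j$, and it raises $\omega$ to the power $2$ exactly when $m$ is odd—shows that $\sigma$ is an \emph{odd} permutation of $\{z_1,z_2,z_3\}$ precisely when $\tr_m(a^3/b^2)\neq\tr_m(1)$; this is the single-root case (2). When $\sigma$ is even, $f$ has three roots if $\sigma$ is the identity and none if $\sigma$ is a $3$-cycle, and one checks that $\sigma$ fixes $\theta$ (so $f$ splits) iff $t_1=\theta^3$ is a cube in the field generated by $\omega$ over $\fm$, namely $\fm$ when $m$ is even and $\mathbb{F}_{2^{2m}}$ when $m$ is odd. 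This separates cases (1) and (3). The main obstacle I expect is exactly this last bookkeeping: verifying that Frobenius-invariance of the three combinations $u+v,\ \omega u+\omega^2 v,\ \omega^2 u+\omega v$ is equivalent to the cube condition in the precisely named field, with the switch between $\fm$ and $\mathbb{F}_{2^{2m}}$ forced by whether $\omega\in\fm$. The degenerate subcase $a=0$ (where $\theta\bar\theta=0$ and $f(z)=z^3+b$ is purely cyclic) should be handled separately and checked directly against the cube conditions.
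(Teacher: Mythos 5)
A preliminary remark: the paper does not prove this lemma at all --- it is quoted from Williams \cite{williams1975note} and used as a black box --- so your proposal has to be judged on its own merits rather than against an internal proof. Its architecture is the classical one: separability of $f$, the observation that the number of roots in $\fm$ is $0$, $1$ or $3$ according to the cycle type of the Frobenius $\sigma$ on the roots, the resolvent $\theta=z_1+\omega z_2+\omega^2z_3$, $\bar\theta=z_1+\omega^2z_2+\omega z_3$ with $\theta\bar\theta=a$, $\theta^3+\bar\theta^3=b$, $\theta^3\bar\theta^3=a^3$, and the trace criterion for $t_1,t_2\in\fm$. All of these identities are correct, and so is your (only sketched) claim that $\sigma$ acts as a transposition exactly when $\tr_m\left(\frac{a^3}{b^2}\right)\neq\tr_m(1)$.

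The gap is in the step you yourself flag as the main obstacle, and as written it is not merely unfinished but false. You assert that, in the even-permutation case, ``$\sigma$ fixes $\theta$ (so $f$ splits) iff $t_1=\theta^3$ is a cube in $\fm(\omega)$''. Take $m$ odd and $f$ split completely in $\fm$: since $\sigma(\omega)=\omega^2$ while $\sigma$ fixes each $z_i$, you get $\sigma(\theta)=z_1+\omega^2z_2+\omega z_3=\bar\theta\neq\theta$ (they differ because $\theta+\bar\theta=z_1\neq0$, as $b\neq 0$), and yet $t_1$ \emph{is} a cube in $\mathbb{F}_{2^{2m}}$, since $\theta$ itself lies in $\mathbb{F}_{2^{2m}}$. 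Worse, for $m$ odd the condition ``$\sigma$ fixes $\theta$'' holds exactly in the transposition (one-root) case, the opposite of what your parenthesis claims. The invariant that actually works is not $\sigma$-invariance of $\theta$ but membership $\theta\in\fm(\omega)$, i.e.\ invariance under $\sigma$ when $m$ is even but under $\sigma^2$ when $m$ is odd; with that replacement the equivalence with the cube condition does go through, the backward direction using that $\theta/c$ is a cube root of unity and hence lies in $\fm(\omega)$ (which also requires $t_1\neq0$). Finally, the deferred case $a=0$ cannot be ``checked directly against the cube conditions'' as you promise: there $t_1t_2=0$, so the both-cubes-or-neither dichotomy (which for $a\neq0$ follows from $t_1t_2=a^3\neq0$) degenerates because $t_1=0$ is a cube, and indeed for $m$ even, $a=0$ and $b$ a non-cube, $f$ has no root in $\fm$ while the right-hand side of item (3) is false --- the statement as quoted implicitly needs $a\neq0$, so your plan for that subcase would uncover this discrepancy rather than close the proof.
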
} 

\mkq{\begin{lemma}
		\label{lemma1}
	Let $\alpha\in \fm$ such that  $ x^{3}+x+\alpha$ has no root in $\mathbb{F}_{2^m}$. Then 
	\begin{enumerate}[(1)]
		\item 	$\alpha\neq0$,  $\tr_m\left( \frac{1}{\alpha^2} \right) = \tr_m(1)$ and there exists some $\beta\in\fm$ such that $\alpha(\beta + \omega)$ is not a cube in  $\fm$ (resp. $\mathbb{F}_{2^{2m}}$) when $m$ is even (resp. odd) and such that $\frac{1}{\alpha^2} = \beta + \beta^2+1$, where $\omega\in\mathbb{F}_{2^2}\backslash\f2$;
		\item  the equation
		\begin{equation}
			\label{lemma-eq1} a^3+\alpha b a^2 + (\alpha^2 b^2 + b^2 +1) a + \alpha^3b^3+\alpha b^2 + \alpha =0
		\end{equation}  has only one solution $(a,b)=(\alpha,1)$ in $\fm\times \fm$. 
	\end{enumerate}
\end{lemma}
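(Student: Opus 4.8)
The plan is to use Lemma~\ref{cubic} (the Williams characterization of cubic factorizations) as the engine for both parts. For part~(1), I would start from the hypothesis that $x^3+x+\alpha$ has no root in $\mathbb{F}_{2^m}$. Normalizing to the Williams form $f(z)=z^3+az+b$ with $a=1$ and $b=\alpha$, the ``no root'' case is exactly case~(3) of Lemma~\ref{cubic}, which immediately gives the trace condition $\tr_m\!\left(\frac{1}{\alpha^2}\right)=\tr_m(1)$ and the non-cube condition on the roots $t_1,t_2$ of the associated quadratic $h(t)=t^2+\alpha t+1$. The claim $\alpha\neq 0$ is clear since $x^3+x$ has the root $x=0$. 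The remaining work is the reparametrization: I would solve the quadratic $\frac{1}{\alpha^2}=\beta+\beta^2+1$ for $\beta$ (this is solvable precisely because $\tr_m\!\left(\frac{1}{\alpha^2}\right)=\tr_m(1)=\tr_m\!\left(\frac{1}{\alpha^2}+1\right)$ once we absorb the $+1$, so I must check the trace bookkeeping carefully) and then verify that $\alpha(\beta+\omega)$ coincides, up to a cube factor, with one of the roots $t_i$ of $h$, so that the non-cube property transfers. The substitution $z=\alpha y$ or $z=\beta$-shifts should convert $h(t)=t^2+\alpha t+1$ into the shape $t^2+bt+a^3$ matching $a=1,b=\alpha$; identifying $t_i$ with $\alpha(\beta+\omega)$ is the computation to nail down.

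For part~(2), the strategy is to treat \eqref{lemma-eq1} as a cubic in $a$ with coefficients depending on $b$, and apply Lemma~\ref{cubic} fiberwise over each value of $b$. First I would verify by direct substitution that $(a,b)=(\alpha,1)$ is indeed a solution. The key reduction is that \eqref{lemma-eq1} should factor or transform — via a shift $a\mapsto a+\alpha b$ or similar — into the depressed cubic $\tilde a^3+A\tilde a+B=0$ whose coefficients $A,B$ are expressible in $b$ and $\alpha$. Applying Lemma~\ref{cubic} to this depressed cubic, the number of solutions in $a$ for fixed $b$ is governed by $\tr_m\!\left(\frac{A^3}{B^2}\right)$ compared to $\tr_m(1)$ together with a cube condition. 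I expect that after the shift the relevant quantity $\frac{A^3}{B^2}$ will reduce to $\frac{1}{\alpha^2}$ (independent of $b$, or depending on $b$ in a controllable way), allowing the hypothesis of part~(1) to be reused directly.

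The main obstacle, and the heart of the argument, is showing uniqueness: that no value of $b\neq 1$ yields any solution $a$, and that for $b=1$ the only solution is $a=\alpha$. Concretely I expect that for the generic fiber the cubic in $a$ falls into case~(3) of Lemma~\ref{cubic} (no roots), inheriting the non-cube condition from the hypothesis, while the special value $b=1$ degenerates so that the cubic acquires a repeated or single root pinned to $a=\alpha$. I would therefore isolate the discriminant-type quantity $t_1 t_2$ (the constant term $a^3$ in Williams' $h(t)$) and the cube condition, and argue that the non-cube hypothesis on $\alpha(\beta+\omega)$ from part~(1) forces the absence of spurious solutions except at the degenerate fiber. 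The delicate point will be handling the even/odd $m$ dichotomy uniformly, since the cube condition lives in $\mathbb{F}_{2^m}$ for even $m$ but in $\mathbb{F}_{2^{2m}}$ for odd $m$, and the factor $\omega\in\mathbb{F}_{2^2}\setminus\mathbb{F}_2$ interacts differently with cubes in the two cases; I would treat these in parallel, using that $\gcd(3,2^m-1)$ differs in the two parities to control which elements are cubes.
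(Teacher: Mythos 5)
Your proposal is correct in substance and, for part (1), matches the paper's proof exactly: apply Lemma~\ref{cubic} with $(a,b)=(1,\alpha)$, read off the trace condition and the non-cube property of the roots $t_1,t_2$ of $h(t)=t^2+\alpha t+1$, observe that $\beta$ exists since $\tr_m\left(\frac{1}{\alpha^2}+1\right)=\tr_m\left(\frac{1}{\alpha^2}\right)+\tr_m(1)=0$, and identify $\alpha(\beta+\omega)$ with a root of $h$. One small simplification you can make: $\alpha(\beta+\omega)$ is \emph{exactly} a root, not merely a root up to a cube factor, since $\bigl(\alpha(\beta+\omega)\bigr)^2+\alpha\cdot\alpha(\beta+\omega)+1=\alpha^2(\beta^2+\beta+\omega^2+\omega)+1=\alpha^2(\beta^2+\beta+1)+1=0$, using $\omega^2+\omega=1$.

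For part (2) you found the paper's decisive step, the shift $a_1=a+\alpha b$, which turns \eqref{lemma-eq1} into $a_1^3+(b^2+1)a_1+\alpha(b+1)^3=0$; but from there your route and the paper's diverge. The paper never invokes Lemma~\ref{cubic} again: for $b\neq 1$ it divides by $(b+1)^3$ and sets $z=a_1/(b+1)$, giving $z^3+z+\alpha=0$, which has no root by hypothesis, while $b=1$ collapses the equation to $a_1^3=0$, forcing $a=\alpha$. Your fiberwise-Williams plan also closes, but only once you supply the step you flagged as delicate: with $A=(b+1)^2$ and $B=\alpha(b+1)^3$ one indeed has $A^3/B^2=1/\alpha^2$ as you predicted, and the roots of the fiber quadratic $t^2+Bt+A^3$ are exactly $(b+1)^3t_1$ and $(b+1)^3t_2$ with $t_1,t_2$ the roots of $t^2+\alpha t+1$; since $(b+1)^3$ is a nonzero cube of $\fm$, non-cubeness transfers verbatim, in $\fm$ and in $\mathbb{F}_{2^{2m}}$ alike, so the even/odd dichotomy you were worried about never requires separate treatment. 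What the paper's finish buys is precisely the avoidance of all of this bookkeeping: the identity $a_1^3+(b+1)^2a_1+\alpha(b+1)^3=(b+1)^3\bigl(z^3+z+\alpha\bigr)$ with $z=a_1/(b+1)$ shows each fiber cubic is the hypothesis cubic in disguise, so no trace or cube analysis is needed at all. Your version is heavier but not wrong; I recommend adopting the rescaling finish.
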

\begin{proof}
(1) First, it is trivial that $\alpha\neq 0$. Then since  $ x^{3}+x+\alpha$ has no root in $\mathbb{F}_{2^m}$,    by Lemma \ref{cubic}, we have  $\tr_m\left( \frac{1}{\alpha^2} \right) = \tr_m(1)$ and $t_1$ is not cubic in $\fm$ (resp. $\mathbb{F}_{2^{2m}}$) when $m$ is even (resp. odd), where $t_1$ is a solution of the equation $t^2+\alpha t+1 = 0$. Finally, it is easy to obtain that $\alpha(\beta + \omega)$ is a solution of $t^2+\alpha t+1 = 0$ and thus the statement holds.

(2) Let $a_1=a+\alpha b$. Then by simplifying, Eq.\ \eqref{lemma-eq1} becomes 
	$$a_1^3+ (b^2+1)a_1 + \alpha (b+1)^3 = 0.$$
	If $b=1$, then $a_1=0$ and thus $a=\alpha$. If $b\neq 1$, then $$\left(\frac{a_1}{b+1}\right)^3 + \frac{a_1}{b+1} + \alpha =0 $$
	has no solution in $\fm$ thanks to the condition that $ x^{3}+x+\alpha$ has no root in $\mathbb{F}_{2^m}$. Therefore, Eq.\ \eqref{lemma-eq1} has only one solution $(a,b)=(\alpha,1)$ in $\fm\times \fm$. 
\end{proof}
}

\mkq{{Since} the resultant of polynomials will be used in our proof, we now recall some basic facts about it. Given two polynomials $ u(x) = a_mx^m+a_{m-1}x^{m-1}+\cdots+a_0$ and $ v(x) = b_nx^n+b_{n-1}x^{n-1}+\cdots+b_0 $ 
over a {field} $K$ with degrees $m$ and $n$, respectively, their resultant $\mathrm{Res}(u,v)\in {K}$ is the determinant of the following square matrix of order $n+m$:
$$ \small \begin{pmatrix} 
	a_m & a_{m-1} &  \cdots & a_0  & 0 & & \cdots & 0 \\
	0 & a_m & a_{m-1} & \cdots & a_0 & 0 & \cdots  & 0 \\
	\vdots &  &  \ddots&  &  &  & & \vdots  \\
	0 & \cdots & 0 & a_m & a_{m-1} &  & \cdots & a_0 \\
	b_n & b_{n-1} & \cdots &  & b_0 & 0 &\cdots  & 0\\
	0 & b_{n} & b_{n-1} & \cdots &  & b_0 & \cdots  & 0\\
	\vdots &  & \ddots &  & &   &  \ddots &   \vdots   \\ 
	0   & \cdots & 0 & b_{n} & b_{n-1}&  &   \cdots &  b_0    \\
\end{pmatrix}.
$$
For a field $K$ and two polynomials $F(x,y), G(x,y) \in K[x,y]$, we use $ \mathrm{Res}_y(F,G)$ to denote the resultant  of $F$ and $G$ with respect to $y$, { which is the resultant of $F$ and $G$ when considered as  polynomials in the single variable $y$.} In this case, $ \mathrm{Res}_y(F,G)\in K[x]$ belongs to the ideal generated by $F$ and $G$. {It is well known that $F(x,y)=0$ and $G(x,y)=0$ have a common solution $(x, y)$ if and only if
	$x$ is a solution of $\mathrm{Res}_y(F,G)(x)=0$. }}

\begin{theorem}[Family F2]\label{th:main2}
Let $F:{\Bbb F}_{2^m}\times{\Bbb F}_{2^m}\rightarrow{\Bbb F}_{2^m}\times{\Bbb F}_{2^m}$ given by
$$F(x,y)=(f(x,y),g(x,y))=(x^{3}+xy+xy^2+\alpha y^{3}, x^{5}+xy+\alpha x^2y^2+\alpha x^{4}y+(1+\alpha)^2 xy^{4}+\alpha y^{5}),$$
with $\alpha\in\mathbb{F}_{2^m}$ such that $ x^{3}+x+\alpha$ has no root in $\mathbb{F}_{2^m}$. Then, $F$ is APN.
\end{theorem}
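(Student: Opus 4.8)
The plan is to establish the APN property through the standard derivative count for quadratic functions. Since $F$ is quadratic, the map $D_F^{[v,w]}(x,y):=F(x+v,y+w)+F(x,y)+F(v,w)$ is $\mathbb{F}_2$-linear in $(x,y)$ for each fixed $(v,w)$, and its kernel automatically contains the two elements $(0,0)$ and $(v,w)$ (the latter because $F(v+v,w+w)=F(0,0)=0$ in characteristic $2$). Hence it suffices to show that for every $(v,w)\neq(0,0)$ the system $D_f^{[v,w]}(x,y)=D_g^{[v,w]}(x,y)=0$ has exactly these two solutions. I would first write $D_f^{[v,w]}$ and $D_g^{[v,w]}$ out explicitly, isolating the homogeneous part inherited from Family F1 (Theorem \ref{th:main} with $k=1$) from the contribution of the added bilinear terms $(xy,\ xy+\alpha x^2y^2)$, whose derivatives $xw+vy$ and $x^2w^2+v^2y^2$ are the source of all the new difficulty: they destroy the homogeneity that, in the proof of Theorem \ref{th:main}, allowed the two-parameter problem to be collapsed to the single projective parameter $u=v/w$. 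Consequently the clean $E^0,E^\infty,E^u$ normalization used there is no longer available, and a genuine two-variable elimination is required.

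I would next dispose of the degenerate directions $v=0$ and $w=0$. Setting one parameter to zero collapses the system to a pair of linearized equations in the remaining unknowns; after dividing out the appropriate power of the nonzero parameter I expect the two equations to combine into a single linearized polynomial which, by part (ii) of Lemma \ref{lm:prop}, is a permutation precisely when a scaled projective polynomial has no root in $\mathbb{F}_{2^m}$. The fractional-linear invariance in part (iii) of Lemma \ref{lm:prop} should then reduce this to the non-existence of roots of $x^3+x+\alpha$ (here $k=1$, so $\gcd(k,m)=1$ and the lemma applies), which holds by hypothesis; a short direct check (for instance the sub-case $v=0$, $w=1$, which reduces to $x(x^3+x+\alpha)=0$) confirms that exactly the two trivial solutions survive in each degenerate direction.

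The heart of the argument is the case $vw\neq 0$. Here I would treat $D_f^{[v,w]}=0$ and $D_g^{[v,w]}=0$ as two polynomials in $y$ with coefficients in $\mathbb{F}_{2^m}[x]$ (parametrized by $v,w$) and eliminate $y$ by computing $\mathrm{Res}_y(D_f,D_g)$, obtaining a univariate condition on $x$, equivalently on a suitable ratio of the unknowns. The goal is to show this resultant vanishes only at the values corresponding to $(x,y)\in\{(0,0),(v,w)\}$. This is exactly where Lemma \ref{lemma1} is meant to be used: I expect the elimination to funnel into an instance of Eq.\ \eqref{lemma-eq1} in appropriately chosen variables $a,b$ built from $x,y,v,w$, with the trivial direction $(v,w)$ corresponding to the point $(a,b)=(\alpha,1)$. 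The uniqueness of that solution — itself a consequence, via Lemma \ref{cubic}, of the standing assumption that $x^3+x+\alpha$ has no root — then forces the solution set to be trivial, while the cube/trace conditions recorded in part (1) of Lemma \ref{lemma1} account for any residual cubic factor produced by the resultant.

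The main obstacle is precisely this elimination step. Because homogeneity is lost, the resultant computation is considerably heavier than in Theorem \ref{th:main}, and the two delicate points are: (a) organizing the algebra so that it reduces \emph{exactly} to Eq.\ \eqref{lemma-eq1} rather than to some less tractable equation; and (b) ensuring that passing to $\mathrm{Res}_y(D_f,D_g)$ introduces no spurious roots, i.e.\ correctly matching the solutions of the eliminated system with the roots of the resultant and ruling out vanishing leading coefficients. Once this bookkeeping is under control, the no-root hypothesis on $x^3+x+\alpha$ closes the argument, giving exactly two solutions for every nonzero $(v,w)$ and hence the APN property.
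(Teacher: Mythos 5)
Your strategy is the same as the paper's---reduce to the derivative system, eliminate $y$ with a resultant, and invoke Lemma \ref{lemma1} together with the cubic criterion of Lemma \ref{cubic}---but as it stands the proposal has two genuine gaps. The first is that everything decisive is left as an expectation. The paper's (MAGMA-assisted) elimination gives the explicit factorization $\mathrm{Res}_y(G,H)(x)=(a^3+ab^2+\alpha b^3)^2\,x(x+a)\,R(x,a,b)\,R(x+a,a,b)$ with $R$ a cubic in $x$, and the heart of the proof is showing that $R$ has no root in $\mathbb{F}_{2^m}$. This does not follow formally from ``the cube/trace conditions in part (1) of Lemma \ref{lemma1}'': one needs (i) that the constant term $B$ of $R$ is nonzero, which is exactly what part (2) of Lemma \ref{lemma1} provides; (ii) the identity $A^3/(\alpha^2B^2)=C/(\alpha B)+C^2/(\alpha^2 B^2)+1/\alpha^2$ for a suitably chosen $C$, which yields $\tr_m\bigl(A^3/(\alpha^2B^2)\bigr)=\tr_m(1)$; and (iii) that the roots of the resolvent $t^2+\alpha Bt+A^3$ are non-cubes, which the paper obtains from the far-from-obvious factorization $t_1=(\beta+\omega)\alpha^4\bigl((\beta+\omega^2)C_1+B_1\bigr)$, where $(\beta+\omega^2)C_1+B_1$ is an explicit cube and $(\beta+\omega)\alpha^4$ is not a cube. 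These identities \emph{are} the proof; without them your outline supplies only the (correct) choice of tools.

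The second gap is a step that would actually fail: you misplace where $(a,b)=(\alpha,1)$ enters. In Lemma \ref{lemma1}, $(a,b)$ are the direction parameters (your $(v,w)$), and Eq.\ \eqref{lemma-eq1} is the vanishing of the constant term $B$ of the cubic factor; its unique solution $(\alpha,1)$ singles out one \emph{exceptional direction} at which the generic argument collapses: for that direction $A=B=0$, so $R(x,\alpha,1)=\alpha^2x^3$ does have the root $x=0$, and ``the cubic has no roots'' is false. The paper therefore checks the direction $(a,b)=(\alpha,1)$ separately, by hand, before running the resultant argument. Your plan instead interprets $(\alpha,1)$ as the image of the trivial solution $(x,y)=(v,w)$ under variables ``built from $x,y,v,w$'', so this exceptional direction is never provided for and, followed literally, your argument breaks exactly there. (Your handling of the degenerate directions $v=0$, $w=0$ via Lemma \ref{lm:prop} is also under-specified---the added terms destroy the homogeneity that would let the two linearized equations merge into a single polynomial to which part (ii) applies---but those directions are in any case covered uniformly by the resultant, so this is a minor point by comparison.)
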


{\begin{proof}
		Since  $F$ is a quadratic function {with $F(0,0)=(0,0)$}, 
	it suffices to show that for any $(a,b)\neq(0,0) \in \fm^2$, the equation 
	\begin{equation}
		\label{Eq_APN}
		F(x+a,y+b)+F(x,y)+F(a,b)=0
	\end{equation} 
	has exactly two solutions $(x,y)\in \{(0,0), (a,b)\}$ in $\fm\times \fm.$ By {a simple calculation}, Eq.\ \eqref{Eq_APN} is equivalent to the following equation system: 
	\begin{subequations} 
		\renewcommand\theequation{\theparentequation.\arabic{equation}}  
		\label{eq_1}  
		\begin{empheq}[left={\empheqlbrace\,}]{align}
			& G(x, y) = ax^2 + (a^2+b^2+b)x  + (\alpha b + a) y^2 + (\alpha b^2+a) y = 0 \label{eq_1_1} \\ 
			& H(x, y)= (\alpha b +a)x^4 + \alpha b^2x^2 + (\alpha^2b^4 + a^4+b^4 + b) x +   \notag\\
			& (\alpha^2a + \alpha b +a ) y^4 + \alpha a^2 y^2 + (\alpha a^4+\alpha b^4+a) y  = 0.  \label{eq_1_2} 
		\end{empheq}
	\end{subequations} 

	First of all, we consider the case $(a,b)=(\alpha,1)$. In this case, Eqs.\ \eqref{eq_1} become 
\begin{subequations} 
	\renewcommand\theequation{\theparentequation.\arabic{equation}}  
	\label{eq_11}  
	\begin{empheq}[left={\empheqlbrace\,}]{align}
		& \alpha x^2 + \alpha^2 x =0 \label{eq_11_1} \\ 
		& \alpha x^2 + (\alpha^4+\alpha^2) x + \alpha^3 y^4+ \alpha^3 y^2 + \alpha^5 y = 0.  \label{eq_11_2} 
	\end{empheq}
\end{subequations} 
From Eq.\ \eqref{eq_11_1}, we know $x\in \{0, \alpha \}$. If $x=0$, plugging it into Eq.\ \eqref{eq_11_2}, we have $\alpha^3 y^4+ \alpha^3 y^2 + \alpha^5 y=0$, i.e., $y(y^3+y+\alpha^2)=0$. Thus $y=0$ since $y^3+y+\alpha^2=0$ has no solution in $\fm$ according to the condition. If $x= \alpha$, together with Eq.\ \eqref{eq_11_2}, we get $
\alpha^3 y^4+ \alpha^3 y^2 + \alpha^5 y+ \alpha^5 =0$, i.e., $(y+1)(y^3+y^2+\alpha^2) = 0$. Hence $y=1$ or $y^3+y^2+\alpha^2 = 0$. Let $Y=y+1$. Then the equation $y^3+y^2+\alpha^2 = 0$ becomes $Y^3+Y+\alpha^2=0$, which has no solution in $\fm$.  Therefore in this case, Eqs.\ \eqref{eq_1} have {only $(x,y) \in \{(0,0),(\alpha, 1)\}$ as solutions} in $\fm\times \fm$.

{In the following, we always assume that $(a,b)\neq(\alpha,1)$. Note that we can see $G, H$ be polynomials in $\f2 [x,y,\alpha, a,b]$. Then with the help of MAGMA, we obtain the resultant of $G$ and $H$ with respect to $y$ as follows } 
\begin{align}
	\mathrm{Res}_y(G,H)(x) = \left(a^3+ab^2+\alpha b^3\right)^2x(x+a) R(x,a,b)R(x+a,a,b),\label{resultant}
\end{align}
where 
\begin{align*}
	R(x,a,b) =& \alpha^2 x^3 + (a^2 + (\alpha  b + \alpha) a + b^2 +\alpha^2  b    +  1 ) x + \\
	& a^3+\alpha b a^2 + (\alpha^2 b^2 + b^2 +1) a + \alpha^3b^3+\alpha b^2 + \alpha 
\end{align*}
In the sequel we shall show that $\mathrm{Res}_y(G,H)(x)=0$ is equivalent to $x(x+a)=0$.

Firstly, we have $a^3+ab^2+\alpha b^3\neq 0$ for any $(a,b)\neq(0,0) \in \fm\times\fm$. Otherwise, for some element  $(a,b)\neq(0,0) \in  \fm\times\fm$, $a^3+ab^2+\alpha b^3 =0$. If $b=0$, then the above equation becomes $a^3=0$, which contradicts the assumption $(a,b)\neq(0,0)$. If $b\neq0$, then we have $c^3+c+ \alpha=0$, where $c=\frac{a}{b}\in\fm$, which contradicts the condition that $ x^{3}+x+\alpha$ has no root in $\mathbb{F}_{2^m}$. 

In addition, we need to show $R(x,a,b)R(x+a, a,b)\neq 0$. Note that $R(x,a,b)=0$ has the same number of solutions in $\fm$ as $R(x+a,a,b)=0$. It suffices to show that the equation $R(x,a,b)=0$ has no solution in $\fm$.  

	Now we consider the equation  $R(x,a,b)=0$, i.e.,
\begin{equation}
	\label{eq_3}
	X^3 + A X + \alpha B=0,
\end{equation} 
where $X = \alpha x$, $A = a^2 + (\alpha  b + \alpha) a + b^2 +\alpha^2  b  + 1 $ and $B= a^3+\alpha b a^2 + (\alpha^2 b^2 + b^2 +1) a + \alpha^3b^3+\alpha b^2 + \alpha.$  

By Lemma \ref{lemma1}, $B\neq0$ holds under the case $(a,b)\neq (\alpha,1)$.  Let $h(t) = t^2+ \alpha  Bt+A^3$. By computation, we have 
$$\frac{A^3}{\alpha^2 B^2} = \frac{C}{\alpha B}+\frac{C^2}{\alpha^2 B^2}+ \frac{1}{\alpha^2},$$
where $$C = (b+1)(a^2+ \alpha (b+1)a + (\alpha^2+1)b^2 + \alpha^2(b+1)+1).$$
Thus $$\tr_m\left( \frac{A^3}{ \alpha^2 B^2}\right) = \tr_m \left( \frac{1}{\alpha^2}\right)  = \tr_m(1)$$ by Lemma \ref{lemma1}.  
Let $ \frac{1}{\alpha^2} = \beta + \beta^2+1$ with $\beta \in\fm$. Then also by Lemma \ref{lemma1}, we know that $\alpha (\beta + \omega) $ is not cubic, where  $\omega\in \mathbb{F}_{2^2}\backslash\f2$. Note that $$(\beta+\omega)(\beta+\omega^2) = \beta^2+\beta+1 =\frac{1}{\alpha^2} $$ and thus 
\begin{equation}
	\label{inverse-beta}(\beta+\omega)^{-1} = (\beta+\omega^2) \alpha^2.
\end{equation}
Moreover, the equation $h(t)=0$ has two solutions $t_1= C+ (\beta + \omega)\alpha  B$ and  $t_2= C+(\beta + \omega^2)\alpha B$ in $\fm$ (resp. $\mathbb{F}_{2^{2m}}$) if $m$ is even (resp. odd). We now show that $t_1$ and $t_2$ are not cubic. Let $\hat{a}=\frac{a}{\alpha}$. Then 
\begin{align*}
	C =& \alpha^2 (b+1) (\hat{a}^2+(b+1)\hat{a} + (1+\frac{1}{\alpha^{2}}) b^2 + (b+1) + \frac{1}{\alpha^{2}} )\\
 =	& \alpha^2 (b+1) (\hat{a}^2 + (b+1)\hat{a} + (\beta+\beta^2) b^2 + b+\beta + \beta^2)  \triangleq \alpha^2 C_1
\end{align*}
and 
\begin{align*}
	B =& \alpha^3 (\hat{a}^3 + b \hat{a}^2 + ( b^2 +  \frac{b^2 +1}{\alpha^2}) \hat{a} + b^3+ \frac{b^2 +1}{\alpha^2}) \\ 
 =	&\alpha^3 (\hat{a}^3 + b \hat{a}^2 + ( b^2 +  (b^2 +1) (\beta+\beta^2+1) ) \hat{a} + b^3+ (b^2 +1) (\beta+\beta^2+1)) \triangleq \alpha^3 B_1.
\end{align*}
Moreover, 
\begin{align*}
	t_1 =&  C+ (\beta + \omega)\alpha  B \\
	=& (\beta + \omega)\alpha^4 \left( (\beta+\omega)^{-1}\alpha^{-4} \alpha^2 C_1 + B_1 \right) \\
	=& (\beta + \omega)\alpha^4 \left( (\beta+\omega^2) C_1 + B_1  \right),
\end{align*}
where the last equality holds due to Eq.\ \eqref{inverse-beta}. Furthermore, by computing directly, we can find that 
$$(\beta+\omega^2) C_1 + B_1  = (\hat{a} + (\beta + \omega) b + \beta + \omega^2)^3.$$ Note that by Lemma \ref{lemma1}, $(\beta + \omega)\alpha^4$ is not cubic. Therefore, $t_1$ is not cubic. Similarly, $t_2$ is not cubic, either. 
 Then by Lemma \ref{cubic},  the equation $x^3 + Ax + \alpha B = 0$ has no solution in $\fm$. 

 	 Hence from Eq.\ \eqref{resultant}, we have $x = 0$ or $x=a$. Next, we  show that  Eqs.\ \eqref{eq_1} have only  $(x,y)\in\{ (0,0), (a,b) \}$ as solutions.  
 
 	If $a=0$, then $x=0$. Moreover, Eq.\ \eqref{eq_1_1} and  Eq.\ \eqref{eq_1_2} become $\alpha by^2+ \alpha b^2y=0$ and $\alpha by^4+ \alpha b^4y=0$, respectively. Thus $y=0$ or $y=b$. In the following, we assume that $a\neq0$. 
 
 {\bfseries Case 1:  $x=0$}. In this case  Eqs.\ \eqref{eq_1} become 
 	\begin{subequations} 
 		\renewcommand\theequation{\theparentequation.\arabic{equation}}  
 		\label{eq_1-x=0}  
 		\begin{empheq}[left={\empheqlbrace\,}]{align}
 			&  (\alpha b + a) y^2 + (\alpha b^2+a) y = 0 \label{eq_1-x=0_1} \\ 
 			&  (\alpha^2a + \alpha b +a ) y^4 + \alpha a^2 y^2 + (\alpha a^4+\alpha b^4+a) y  = 0.  \label{eq_1-x=0_2} 
 		\end{empheq}
 	\end{subequations} 
 	We now show that Eqs.\ \eqref{eq_1-x=0} have only one solution $y=0$ for any $(a,b)\in\fm\times \fm \backslash\{(0,0), (\alpha, 1)\}$. 
	 If $a = \alpha b \neq \alpha $, then $b \neq 0, 1$ and by Eq.\ \eqref{eq_1-x=0_1}, we get $\alpha (b^2+b) y =0$, i.e., $y=0$. If $a\neq \alpha b$ and $\alpha b^2+a = 0$, then by Eq.\ \eqref{eq_1-x=0_1},  $y=0$ clearly. Now we consider the subcase $(\alpha b + a)(\alpha b^2+a)\neq 0$.
In this subcase, from Eq.\ \eqref{eq_1-x=0_1}, we have  $y = \frac{\alpha b^2+a}{\alpha b + a}$. Plugging it into Eq.\ \eqref{eq_1-x=0_2} and simplifying, we obtain 
$$ (\alpha b^3 + a^3 + ab^2) ( a^3+\alpha b a^2 + (\alpha^2 b^2 + b^2 +1) a + \alpha^3b^3+\alpha b^2 + \alpha) = 0, $$  
which contradicts the condition that $ x^{3}+x+\alpha$ has no root in $\mathbb{F}_{2^m}$ and Lemma \ref{lemma1}. Therefore Eqs.\ \eqref{eq_1-x=0} have only one solution $y=0$.

	{\bfseries Case 2:} $x=a$. In this case  Eqs.\ \eqref{eq_1} become 
\begin{subequations} 
	\renewcommand\theequation{\theparentequation.\arabic{equation}}  
	\label{eq_1-x=a}  
	\begin{empheq}[left={\empheqlbrace\,}]{align}
		& (\alpha b + a) (y+b)^2 + (\alpha b^2+a) (y+b) = 0  \label{eq_1-x=a_1} \\ 
		& (\alpha^2a + \alpha b +a ) (y+b)^4 + \alpha a^2 (y+b)^2 + (\alpha a^4+\alpha b^4+a) (y+b)  = 0.  \label{eq_1-x=a_2} 
	\end{empheq}
\end{subequations} 
It is clear that $y=b$ is the unique solution of Eqs.\ \eqref{eq_1-x=a} by the discussions of the case when $x=0$. 

	To summarize, Eqs.\ \eqref{eq_1} {have only $(x,y) \in \{(0,0), (a,b)\}$ as solutions} in $\fm \times \fm$ for any $(a,b)\in \fm \times \fm\backslash\{(0,0)\}$. Therefore, $F$ is APN over $\fm\times\fm.$ 
\end{proof}}

As for the case of family $\cF_1$,  family F2 generalizes the one given in Theorem \ref{th:lzlq}. Indeed, our family includes this one and can be defined also over $\mathbb{F}_{2^{2m}}$, with $m\equiv 0\mod 3$.

\subsection{New APN functions from our families}

In this section we show that our constructions can produce new instances of APN functions. In particular, for $n=12$ ($m=6$), we obtain APN functions that are inequivalent to any APN function belonging to an already known family.

For this dimension, we consider the EA-invariant $N_F$ for all known APN functions, that is, functions belonging to a known family\footnote{Due to the hardness of checking the APNness condition for the  functions obtained via the generalized isotopic shift (GIS) construction, presented in \cite{seta18-1}, we do not consider these maps.
Moreover, if we restrict the coefficients of the polynomials to a subfield of $\mathbb{F}_{2^{12}}$, following the approach  done in \cite{seta18-1} for $n=9$, then all the functions obtained (for $n=12$) from the GIS construction are equivalent to a Gold map. It is not known whether this is true also for unrestricted coefficients.
}.

In Table \ref{tab:APNfamuni} and \ref{tab:APNfambiv}, we report all the known families of APN functions that can be defined over $\mathbb{F}_{2^{12}}$. {We divide the families based on whether they are presented  in univariate representation or in bivariate representation.} Moreover, among the monomial APN functions we report only the Gold APN function since, from the results in \cite{yoshi16}, we have that any quadratic APN function equivalent to a power function is equivalent to a Gold function.

\begin{table}[h!]
    \centering
\scriptsize{    \begin{tabular}{|c|c|c|c|}

    \hline
      $N^\circ$&Functions &  Conditions &  In\\
      \hline
      \hline
        1  & $x^{2^i+1}$& $\gcd(i,n)=1$& \cite{86} \\
          \hline
       & & $n=3k$,   gcd$(k,3)$= gcd($s,3k$)=1, & \\
		2& $x^{2^s+1}+u^{2^k-1}x^{2^{ik}+2^{mk+s}}$ &  $i=sk$ mod $3$, $m=3-i$,   & \cite{30} \\
		& & $n\geq12$, $u$ primitive in $\mathbb{F}_{2^n}^*$   & \\
     
        \hline
		& & $n=3k$,  gcd$(k,3)$= gcd($s,3k$)=1,   & \\
		3& $ux^{2^s+1}+u^{2^k}x^{2^{-k}+2^{k+s}}+$ &  $v,w \in \mathbb{F}_{2^k}$, $vw\neq1$,  & \cite{BraBy08,13} \\
		& $vx^{2^{-k}+1}+wu^{2^k+1}x^{2^s+2^{k+s}}$ & $3|(k+s)$ $u$ primitive in $\mathbb{F}_{2^n}^*$   & \\
		\hline
		&&$n=3m$, $\gcd(s,m)=1$, $v\in\mathbb{F}_{2^{m}}^{*}$&\\
	4	&$L(z)^{2^m+1}+vz^{2^m+1}$& $\mu\in\mathbb{F}_{2^{3m}}$ with $\mu^{2^{2m}+2^m+1}\neq1$&\cite{LZLQ}\\
	&&and $ L(z) = z^{2^{m+s}}+\mu z^{2^s}+z$ permutation &\\
		\hline
		   & & $q=2^m$, $n=2m$,  gcd($i,m$)=1,   & \\
	5& $sx^{{q+1}}+x^{2^i+1}+x^{q({2^i+1})}$ & $c\in\mathbb{F}_{2^n}$, $s\in\mathbb{F}_{2^n}\setminus\mathbb{F}_q$, & \cite{BudCar08}\\
		&$+ cx^{2^{i}q+1}+ c^{q}x^{{2^i+q}} $    &$z^{2^i+1}+ c z^{2^i} + c ^{q}z+1$& \\
		& & has no solution $x$ such that $x^{q+1}=1$      & \\
		\hline
		6& $x^3+a^{-1}\textit{Tr}(a^3x^9)$& $a\neq0$ & \cite{BCL09}\\
		\hline
		7& $x^3+a^{-1}\textit{Tr}_n^3(a^3x^9+a^6x^{18})$& $3|n$, $a\neq0$ & \cite{BCL09.2} \\
		\hline
		8& $x^3+a^{-1}\textit{Tr}_n^3(a^6x^{18}+a^{12}x^{36})$& $3|n$, $a\neq0$ & \cite{BCL09.2} \\
		
		\hline

    \end{tabular}
    \caption{Known APN families in univariate form over $\mathbb{F}_{2^{n}}$ that can be defined over $\mathbb{F}_{2^{12}}$}
    \label{tab:APNfamuni}}
\end{table}

\begin{table}[h!]
    \centering
    \scriptsize{
    \begin{tabular}{|c|c|c|c|}
    \hline
      $N^\circ$&Functions &  Conditions &  In\\
      \hline
      \hline
      &&$m$ even,&\\
      9	& $(xy,x^{2^k+1} +\alpha x^{(2^k+1)2^i})$ &  $\gcd(k,m)=1$,   & \cite{pott}\\
      &&$\alpha$ not a cube&\\
        \hline
        &&$\gcd(k,m)=1$,&\\
	10 & $(xy, x^{2^{3k}+2^{2k}}+ax^{2^{2k}}y^{2^k}+by^{2^k+1})$ & $z^{2^k+1}+az+b$  &   \cite{tan19}\\
		&&has no root in $\mathbb{F}_{2^{m}}$ &\\
	  \hline
		 &  & $m$ even, $\gcd(i,m)=1$,& \\
		11&$(xy, x^{2^i+1}+x^{2^{i+m/2}}y^{2^{m/2}}+bxy^{2^i}+cy^{2^i+1})$& $(cz^{2^i+1}+bz^{2^i}+1)^{2^{m/2}+1}+z^{2^{m/2}+1}$&\cite{CBC21}\\
		&& has no root in $\mathbb{F}_{2^{m}}$ &   \\
		  \hline
		 &  &$m\equiv 2 \pmod 4$, $\gcd(k,m)=1$ &\\
	12	&$(x^{2^i+1}+By^{2^i+1}, x^{2^{k+m/2}}y+\frac{a}{B}xy^{2^{k+m/2}})$&$B$ not a cube,&
		\cite{golkol}\\
		&& $a \in \mathbb{F}_{2^{m/2}}^*$, $B^{2^{k}+2^{k+m/2}} \ne a^{2^{k}+1}.$&\\
		\hline
	
    \end{tabular}
    \caption{Known APN families in bivariate forms over $\mathbb{F}_{2^{2m}}$ that can be defined over $\mathbb{F}_{2^{12}}$}
    \label{tab:APNfambiv}}
\end{table}

In Table \ref{tab1}, we report the values obtained for the invariant $N_F$ and the corresponding family.
\begin{table}[h!]
    \centering

    \begin{tabular}{|l|c|}
   \hline
     { \centering $ {N_F}$ }  & { $N^\circ$ in Table \ref{tab:APNfamuni} or \ref{tab:APNfambiv} }\\
      \hline
      \hline
             $[n_1=1365,n_2=100100,n_3= 99840,n_4=  91] $ & family n. 1 (Gold)\\
             \hline
       $[n_1=1365, n_2=100100,n_3=141700]$  & families n. 2, 3  \\
       $ [n_1=1365,n_2=100100,n_3=144010,n_4= 140]$ & and 4 \\
       \hline
       $[n_1=1365, n_2=100100,n_3= 142932 ] $ & families n. 5, 9, 10 and 11\\
       \hline
       $ [n_1=1365, n_2=100100,n_3= 139980,n_4=  152 ]$ & family n. 6 \\
       $ [n_1=1365,n_2= 100100,n_3= 138918,n_4=  103 ]$ & \\
       \hline
       $[n_1=1365,n_2= 100100, n_3=144728,n_4=  184 ] $ &  \\
       $[n_1=1365,n_2= 100100,n_3= 143502,n_4=  198 ] $ & families n. 7 and 8\\
       $[n_1=1365,n_2= 100100,n_3= 144584,n_4=  229 ] $ & \\
       $[n_1=1365,n_2= 100100,n_3= 143286,n_4=  155 ] $ & \\
       \hline
       $ [n_1=1365,n_2= 100100,n_3= 144759,n_4= 126 ]$ & family n. 12\\
       \hline
    $ [n_1=1365,n_2=100100,n_3=140664]$ & family F1 from Theorem \ref{th:main}\\
    \hline
    $ [n_1=1365,n_2=100100,n_3=144198,n_ 4=192]$ &family F2 from Theorem \ref{th:main2}\\
    \hline
    \end{tabular}
    \caption{EA-invariant $N_F$ for the known APN families in $\mathbb{F}_{2^{12}}$}
    \label{tab1}
\end{table}

Due to the result of Yoshiara \cite{ccz-ea}, CCZ-equivalence between quadratic APN functions coincides with EA-equivalence. 
Therefore, the invariant $N_F$ tells us that our construction is not included in any known family. {Hence, we obtained two new families of APN functions.}

\section{Concluding remarks}
We have introduced two new families of APN quadratic functions using biprojective polynomials and the Dillon's method. These constructions include two other families, family $\mathcal{F}_1$ in \cite{GOLOGLU} and the first family in \cite{LZLQ}. Our constructions allow to obtain APN functions over $\mathbb{F}_{2^{2m}}$ also for the case $3\mid m$.

In \cite{golkol}, the authors have investigated the equivalence between APN functions obtained from the biprojective framework. In particular, from  family $\cF_1$ we can get $\varphi(m)/2$ inequivalent functions ($\varphi$ is the Euler's function), one for each $k\le m/2$ coprime with $m$. It would be interesting to determine if, for functions defined in Theorem \ref{th:main}, for a fixed $k$ we can get more than one function (up to equivalence) using different values of $\alpha$.
The same for functions derived from Theorem \ref{th:main2}.

Another interesting problem is determining the Walsh spectra of our functions, that is the set $\{W_F(a,b)\,:\,a,b\in\mathbb{F}_{2^n}\}$. 
For the case of family $\cF_1$, we have a Gold-like Walsh spectrum \cite{KKK}. This has been determined showing that the functions in $\cF_1$ are 3-to-1.
Indeed, for quadratic APN functions we have the following property.
\begin{theorem}[\cite{KKK}]\label{th:3to1}
Let $n$ even and $F$ be a quadratic APN function over $\mathbb{F}_{2^n}$.
If $F(0)=0$ and for any $b\ne0$ in the image of $F$ we have at least three pre-images, then
$F$ is 3-to-1, that is, we have $|F^{-1}(0)|=1$ and for all other $b\in\mathrm{Im}(F)\setminus\{0\}$ we have $|F^{-1}(b)|=3$. Moreover, $F$ has a Gold-like Walsh spectrum, i.e. $\{|W_F(a,b)|\,:\,a,b \in \mathbb{F}_{2^n}\}=\{0,2^{n/2},2^{{n/2}+1}\}$.
\end{theorem}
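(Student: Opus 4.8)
The plan is to treat the two assertions separately: first the $3$-to-$1$ property (a statement purely about fibre sizes), and then the Gold-like spectrum (which I would deduce from it). Throughout I would write $B_F(x,a)=F(x+a)+F(x)+F(a)$; since $F$ is quadratic with $F(0)=0$ this is $\mathbb{F}_2$-linear in $x$, and APN-ness says precisely that for every $a\neq0$ the kernel of $x\mapsto B_F(x,a)$ equals $\{0,a\}$. Setting $S_a=\{x:F(x+a)=F(x)\}=\{x:B_F(x,a)=F(a)\}$, this set is empty or a coset of $\{0,a\}$, so $|S_a|\in\{0,2\}$.

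For the first claim I would count collisions in two ways. Writing $t_c=|F^{-1}(c)|$, the number of ordered pairs $(x,x')$ with $x\neq x'$ and $F(x)=F(x')$ is $\sum_c t_c(t_c-1)=\sum_c t_c^2-2^n$, and it also equals $\sum_{a\neq0}|S_a|\le 2(2^n-1)$; hence $\sum_c t_c^2\le 3\cdot 2^n-2$. Feeding in $t_0\ge1$ and $t_c\ge3$ for $c\in\mathrm{Im}(F)\setminus\{0\}$ (so $t_c^2\ge 3t_c$ there) gives $\sum_c t_c^2\ge t_0^2+3(2^n-t_0)$, and comparing the two bounds yields $(t_0-1)(t_0-2)\le0$, i.e. $t_0\in\{1,2\}$; in either case the inequalities are forced to be equalities, so every nonzero fibre has size exactly $3$. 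To exclude $t_0=2$ is where $n$ even enters: $t_0=2$ with all other fibres of size $3$ would give $2+3|\mathrm{Im}(F)\setminus\{0\}|=2^n$, forcing $2^n\equiv 2\pmod 3$, whereas $n$ even gives $2^n\equiv1\pmod3$. Hence $t_0=1$ and $F$ is $3$-to-$1$ (consistently, $|\mathrm{Im}(F)\setminus\{0\}|=(2^n-1)/3$ is an integer exactly because $n$ is even).

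For the spectral claim I would use that each component $F_b=\tr_n(bF)$ is quadratic, so $|W_F(a,b)|\in\{0,2^{(n+k_b)/2}\}$ where $k_b=\dim\mathrm{rad}(F_b)$ is the dimension of the radical of $\beta_b(x,y)=\tr_n(bB_F(x,y))$, an even number since $n$ is even; "Gold-like'' is thus equivalent to $k_b\in\{0,2\}$ for every $b\neq0$. The key input is that $3$-to-$1$-ness forces $S_a\neq\emptyset$ for all $a\neq0$ (the collision count above now equals $2^n-1$, so every difference occurs), equivalently $F(a)\in\mathrm{Im}(B_F(\cdot,a))$. For $a\in\mathrm{rad}(F_b)\setminus\{0\}$ this image is exactly the hyperplane $\{h:\tr_n(bh)=0\}$, whence $\tr_n(bF(a))=F_b(a)=0$: the component $F_b$ vanishes on its own radical. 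Substituting this into the identity $W_{F_b}(\gamma)^2=\sum_a \mathcal{A}_{F_b}(a)(-1)^{\tr_n(\gamma a)}$, with $\mathcal{A}_{F_b}(a)=\sum_x(-1)^{F_b(x)+F_b(x+a)}$, collapses the autocorrelation to $\mathcal{A}_{F_b}(a)=2^n$ on $\mathrm{rad}(F_b)$ and $0$ off it, giving $W_{F_b}(\gamma)^2=2^{n+k_b}$ when $\gamma\in\mathrm{rad}(F_b)^\perp$ and $0$ otherwise; in particular $W_F(0,b)^2=2^{n+k_b}\neq0$.

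The main obstacle is then to bound $k_b\le2$, i.e. to rule out radical dimension $\ge4$, and this genuinely requires the $3$-to-$1$ hypothesis: the ortho-derivative count $\sum_{b\neq0}(2^{k_b}-1)=2^n-1$ (each nonzero $a$ lies in the radical of a unique component), equivalently the APN fourth-moment identity, holds for every quadratic APN function and only pins down $\sum_{b\neq0}2^{k_b}=2^{n+1}-2$, which is compatible with some $k_b\ge4$ — indeed non-$3$-to-$1$ quadratic APN functions with larger amplitudes do occur. My plan is to exploit the global structure produced by $3$-to-$1$-ness: the three pairwise differences of each size-$3$ fibre are the nonzero vectors of a $2$-dimensional subspace, and these partition $\mathbb{F}_{2^n}^*$ into a line spread of $PG(n-1,2)$ (which exists precisely because $n$ is even). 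I would attempt to show that a component with $k_b\ge4$ is incompatible with this spread, either by analysing the radical $\mathrm{rad}(F_b)$ of periods of $F_b$ against the spread lines, or by evaluating a higher moment of $\{W_F(0,b)\}_b$ that $3$-to-$1$-ness renders computable; I expect this exclusion step to be the crux. A purely local analysis of the vector-valued form $B_F\colon\mathrm{rad}(F_b)\times\mathrm{rad}(F_b)\to b^\perp$ will not suffice, since anisotropic alternating forms of that type exist in all even dimensions, so the argument must be global.
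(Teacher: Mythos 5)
First, note that the paper does not prove this statement at all: Theorem~\ref{th:3to1} is quoted verbatim from \cite{KKK} and used as a black box, so there is no internal proof to compare against; your attempt has to be judged on its own. Your first half is correct and complete: the double count $\sum_c t_c(t_c-1)=\sum_{a\neq 0}|S_a|\le 2(2^n-1)$ against the lower bound $t_0^2+3(2^n-t_0)$, the conclusion $t_0\in\{1,2\}$ with equality forced throughout, and the exclusion of $t_0=2$ via $2^n\equiv 1\pmod 3$ for $n$ even, is exactly the standard argument and is watertight. Your preparatory work for the spectral half is also sound: $3$-to-$1$-ness forces $|S_a|=2$ for every $a\neq0$, hence $F(a)\in\mathrm{Im}(B_F(\cdot,a))$, hence each component $F_b$ vanishes on its radical, giving $W_F(0,b)^2=2^{n+k_b}\neq 0$; and you correctly identify that the fourth-moment identity $\sum_{b\neq0}(2^{k_b}-1)=2^n-1$ alone cannot rule out $k_b\ge 4$.

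However, the proof is not finished: the exclusion of $k_b\ge 4$, which you yourself call ``the crux'', is only a plan (``I would attempt to show\dots''), and the routes you sketch (incompatibility with the line spread, unspecified higher moments) are speculative and not carried out. This is a genuine gap, since without it you only get that the spectrum is contained in $\{0\}\cup\{2^{(n+k_b)/2}\}_b$ with $\sum_{b\ne0}2^{k_b}=2^{n+1}-2$, which admits solutions with $k_b=4$. The gap can in fact be closed with tools you already have on the table, by exploiting the fibre structure one step further. Since $F$ is $3$-to-$1$,
\begin{equation*}
W_F(0,b)=\sum_{c\in\mathrm{Im}(F)}t_c(-1)^{\tr_n(bc)}=1+3\sum_{c\in\mathrm{Im}(F)\setminus\{0\}}(-1)^{\tr_n(bc)}\equiv 1 \pmod 3 ,
\end{equation*}
so $W_F(0,b)\neq0$ for all $b$, and since $2\equiv -1\pmod 3$ the sign of $W_F(0,b)=\pm 2^{(n+k_b)/2}$ is pinned down: $W_F(0,b)=(-2)^{(n+k_b)/2}$. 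Now let $c_j=\#\{b\neq 0: k_b=2j\}$. The three identities $\sum_j c_j=2^n-1$ (counting), $\sum_j c_j 4^j=2^{n+1}-2$ (your radical/second-moment count), and $\sum_j c_j(-2)^j=0$ (from $\sum_{b\neq0}W_F(0,b)=2^n(t_0-1)=0$ and the sign determination) combine linearly: subtracting twice the first from the sum of the other two gives $\sum_{j\ge 2}\bigl(4^j+(-2)^j-2\bigr)c_j=0$, and all coefficients with $j\ge2$ are strictly positive, so $c_j=0$ for $j\ge2$, i.e.\ $k_b\in\{0,2\}$; then $c_1=(2^n-1)/3>0$ and $c_0=2(2^n-1)/3>0$ show all three values $0,2^{n/2},2^{n/2+1}$ actually occur. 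In short: your structural lemma (components vanish on their radicals) is a valid ingredient, but the decisive step is the mod-$3$ congruence on $W_F(0,b)$ coming from the fibre sizes, combined with the first and second power moments at $a=0$ --- a global counting argument, as you anticipated, but far more elementary than the spread-based attack you proposed.
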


For the case $k=1$, it is possible to show that APN functions from Theorem \ref{th:main} are 3-to-1, and thus we have a Gold-like Walsh spectrum.

\begin{theorem}
Let $k=1$ and $m>0$. Let $\alpha\in\mathbb{F}_{2^m}$ be such that $x^3+x+\alpha$ has no root in  $\mathbb{F}_{2^m}$. Then, the APN function given in Theorem \ref{th:main}
$$
F(x,y)=(x^3+xy^2+\alpha y^3,x^5+\alpha x^4y+(1+\alpha^2)xy^4+\alpha y^5)
$$ 
has Gold-like Walsh spectrum.
\end{theorem}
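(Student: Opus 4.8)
The plan is to invoke Theorem \ref{th:3to1}, so the whole task reduces to showing that the APN function $F$ from Theorem \ref{th:main} with $k=1$ has at least three preimages for every nonzero value in its image. Concretely, I would show that $F$ is $3$-to-$1$ by proving that the equation $F(x,y)=F(x',y')$ has exactly the solutions coming from a fixed-point-free involution-type structure. Since $F$ is quadratic with $F(0,0)=(0,0)$, the natural route is: $F(x,y)=F(x+v,y+w)$ for some $(v,w)\neq(0,0)$ is equivalent to $D_F^{[v,w]}(x,y)=F(v,w)$, i.e.\ to the inhomogeneous version of the derivative equation already analyzed in the proof of Theorem \ref{th:main}. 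The homogeneous system $D_F^{[v,w]}(x,y)=0$ was shown there to have exactly two solutions (APNness); to get $3$-to-$1$, I need to understand the fibers of $F$ directly.

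**First** I would set up the counting argument used in the proof of Theorem \ref{th:3to1}'s hypothesis: for a quadratic APN function, the preimage sizes are controlled by the kernel of the associated bilinear form $B_F(x,y)=F(x+y)+F(x)+F(y)$ (here written in the bivariate pair $(x,y)$). The key identity is that $|F^{-1}(b)|\in\{0,1,3\}$ precisely when the linear space $\{(v,w): B_F((x,y),(v,w))=0 \text{ for all }(x,y)\}$ — the linear kernel of $F$ — has dimension $1$, equivalently when the only $\mathbb{F}_2$-linear structures are trivial. So the real content is to show that $F$ has a nontrivial linear structure space of dimension exactly $1$, which by the $3$-to-$1$ characterization forces every nonzero image value to have exactly three preimages. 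I would therefore compute the \emph{linear structures} of each component, i.e.\ solve $D_F^{[v,w]}(x,y)=\text{const}$ identically in $(x,y)$.

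**The concrete computation** is to specialize the derivative expressions $E_f^0, E_g^0, E_f^\infty, E_g^\infty, E_f^u, E_g^u$ from the proof of Theorem \ref{th:main} to $q=2$ and look for $(v,w)$ such that the linear-in-$(x,y)$ part of $D_F^{[v,w]}$ vanishes identically. Because the permutation arguments in that proof (via Lemma \ref{lm:prop}) already show the relevant linearized maps $\alpha^{q^2}x^{q^2}+x^q+\alpha^q x$ and $L(y)$ are bijections, the linear part of each derivative is a nonzero linear map whenever $(v,w)\neq(0,0)$; I would turn this around to read off that the map $(x,y)\mapsto B_F((x,y),\cdot)$ has one-dimensional kernel. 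Equivalently, I would verify that $(v,w)=(0,0)$ is the only pair making $D_F^{[v,w]}$ constant, except for a single distinguished direction that accounts for the $3$-to-$1$ collapse.

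**The main obstacle** will be matching the $3$-to-$1$ hypothesis of Theorem \ref{th:3to1} to the derivative data cleanly: Theorem \ref{th:3to1} requires ``at least three preimages for every nonzero $b$ in the image,'' and the honest way to get this for \emph{all} such $b$ at once is a global rank computation rather than a case-by-case fiber count. I expect the cleanest argument is to exhibit a fixed nonzero $(v_0,w_0)$ (independent of the point) that is a \emph{period} of $F$, i.e.\ $F(x+v_0,y+w_0)=F(x,y)$ identically; then every fiber is a union of cosets of $\langle(v_0,w_0)\rangle$ and APNness forces size exactly $3$ on nonzero values. Finding this universal period amounts to solving $E_f^u=E_g^u=0$ and the $0,\infty$ cases simultaneously for a single $(v,w)$, which is precisely where the $k=1$ (cubic) structure of $x^3+x+\alpha$ must be used — the analogue of the step where $\phi''(y)$ was shown root-free. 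Once the period is identified, the conclusion follows immediately from Theorem \ref{th:3to1}.
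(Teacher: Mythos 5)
Your reduction to Theorem \ref{th:3to1} is the same first move the paper makes, but the mechanism you propose for producing the three preimages is fatally flawed. You plan to exhibit a fixed nonzero pair $(v_0,w_0)$ that is an additive \emph{period} of $F$, i.e.\ $F(x+v_0,y+w_0)=F(x,y)$ identically. No APN function can have such a period: it would mean the derivative $F(x+v_0,y+w_0)+F(x,y)$ is identically zero, so the equation $F(x+v_0,y+w_0)+F(x,y)=0$ has $2^{2m}$ solutions, whereas APNness allows at most two. The same inconsistency infects your intermediate claim that the space of linear structures $\{(v,w): B_F\bigl((x,y),(v,w)\bigr)=0 \text{ for all }(x,y)\}$ should have dimension exactly $1$: any nonzero element of that space would make some derivative constant, again contradicting $\delta_F=2$; for an APN quadratic function this space is trivial. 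Moreover, even setting APNness aside, a period would make every fiber a union of cosets of a subgroup of order $2$, forcing even fiber sizes — incompatible with the odd fiber size $3$ you are trying to establish.

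The symmetry that actually does the job is multiplicative, not additive. The paper's proof takes $\beta\in\mathbb{F}_{2^m}$ with $\beta^2+\beta+1=1/\alpha^2$ (which exists by Lemma \ref{lemma1}, using $\tr_m(1/\alpha^2)=\tr_m(1)$) and defines the $\mathbb{F}_2$-\emph{linear bijection}
$$
L(x,y)=\left((\beta+1)x+\tfrac{y}{\alpha},\ \tfrac{x}{\alpha}+\beta y\right),
$$
which satisfies $L^2=L+\mathrm{id}$ and $L^3=\mathrm{id}$, hence has order $3$ and no nonzero fixed points, and one verifies directly that $F\circ L=F$. The orbits $\{z,L(z),L^2(z)\}$ of nonzero points then have size $3$ and $F$ is constant on each, so every nonzero image value has at least three preimages and Theorem \ref{th:3to1} applies. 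In short: replace ``translation-invariance'' by ``invariance under a fixed-point-free linear map of order $3$''; as written, your key step cannot be repaired within the additive framework you chose.
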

\begin{proof}
From Lemma \ref{lemma1} we have that $\tr_m(1/\alpha^2)=1$ and there exists $\beta$ such that $\beta^2+\beta+1=1/\alpha^2$.

Then, let us consider the linear transformation $L(x,y)=((\beta+1)x+y/\alpha,x/\alpha+\beta y)$. $L$ is a bijection and it is easy to check that 
for any $(x,y)\in\mathbb{F}_{2^m}\times\mathbb{F}_{2^m}$ we have
$$
L^2(x,y)=L(x,y)+(x,y), \text{ and }L^3(x,y)=(x,y).
$$

By a direct computation, using the fact that $\beta^2+\beta+1=1/\alpha^2$, we have that $F(L(x,y))=F(x,y)$ for any $(x,y)\in\mathbb{F}_{2^m}\times\mathbb{F}_{2^m}$. 

Therefore, for any $b\ne0$ in the image of $F$ we have at least three pre-images, and thus, from Theorem \ref{th:3to1} we have that $F$ has  Gold-like Walsh spectrum.
\end{proof}

From computational results, this seems to be true also for the general case of Theorem \ref{th:main}. 

\section{Acknowledgment}

 {The work of Kangquan Li is supported by the National Natural Science Foundation of China under Grant (No.~62172427).}

\end{document}